
\documentclass[draft]{birkjour}
\usepackage[noadjust]{cite}
\usepackage{xcolor}
\RequirePackage[all]{xy}


%
%
\newtheorem{thm}{Theorem}[section]

\newtheorem{lem}[thm]{Lemma}

\theoremstyle{definition}

\theoremstyle{remark}
\newtheorem{rem}[thm]{Remark}

\newtheorem*{ex}{Example}
\numberwithin{equation}{section}

\newcommand{\BibTeX}{B\kern-0.1emi\kern-0.017emb\kern-0.15em\TeX}
\newcommand{\XYpic}{$\mathrm{X\kern-0.3em\raisebox{-0.18em}{Y}}$-$\mathrm{pic}\,$}

\newcommand{\cl}{C \kern -0.1em \ell}  



%

\newcommand{\ed}{\end{document}}

\tolerance=1000
\begin{document}

%
%
%
%
%
%
%
%
%

\title[Introducing Multidimensional Dirac--Hestenes Equation]
 {Introducing Multidimensional\\ Dirac--Hestenes
 Equation}
\author[S.~Rumyantseva]{Sofia Rumyantseva}
\address{%
HSE University, 101000\\ Moscow, Russia
}
\email{srumyanceva@hse.ru}
%
\author[D.~Shirokov]{Dmitry Shirokov}
\address{%
HSE University, 101000\\ Moscow, Russia;\\
 Institute for Information Transmission Problems\\ of the Russian Academy of Sciences, 127051\\ Moscow, Russia}
\email{dshirokov@hse.ru, shirokov@iitp.ru}
\subjclass{Primary 
35Q41, 81Q05, 15A66; Secondary 70S15, 81T13}
\keywords{geometric algebra, Dirac--Hestenes equation, gauge invariance, Dirac equation, Clifford algebra}
\date{\today}
\dedicatory{Last Revised:\\ \today}
\begin{abstract}
It is easier to investigate phenomena in particle physics geometrically by exploring a real solution to the Dirac--Hestenes equation instead of a complex solution to the Dirac equation. The current research presents a formulation of  the multidimensional Dirac--Hestenes equation. Since the matrix representation of the complexified (Clifford) geometric algebra $\mathbb{C}\otimes\cl_{1,n}$ depends on the parity of $n$, we examine even and odd cases separately. In the geometric algebra $\cl_{1,3}$, there is a lemma on a unique decomposition of an element of the minimal left ideal into the product of the idempotent and an element of the real even subalgebra. The lemma is used to construct the four-dimensional Dirac--Hestenes equation. The analogous lemma is not valid in the multidimensional case, since the dimension of the real even subalgebra of $\cl_{1,n}$ is bigger than the dimension of the minimal left ideal for $n>4$. Hence, we consider the auxiliary real subalgebra of $\cl_{1,n}$ to prove a similar statement. We present the multidimensional Dirac--Hestenes equation in $\cl_{1,n}$. We prove that one might obtain a solution to the multidimensional Dirac--Hestenes equation using a solution to the multidimensional Dirac equation and vice versa. We also show that the multidimensional Dirac--Hestenes equation has gauge invariance.  
\end{abstract}
\label{page:firstblob}
\maketitle

\section{Introduction}
In Minkowski space $\mathbb{R}^{1,3}$, the classical four-dimensional Dirac equation is equivalent to the Dirac--Hestenes equation \cite{lasenby1998gravity,Lounesto,Hestenes}. It means that we can obtain a solution to the Dirac--Hestenes equation using a solution to the Dirac equation, and conversely.
The Dirac--Hestenes equation gives a deeper understanding of geometry in various tasks, as the considering wave function is entirely real. 
The current research presents a formulation of  the multidimensional Dirac--Hestenes equation in the real (Clifford) geometric algebra $\cl_{1,n}$.

Generalizing known theories to the multidimensional case can potentially lead to the development of a unified field theory. Unified field theories aim to describe all fundamental forces and particles within a single coherent model, often necessitating the inclusion of additional spatial dimensions. The geometric algebra approach, with its seamless integration of spinor fields, gauge transformations, and spacetime geometry, offers a powerful tool for addressing these challenges \cite{Benn}. For instance, in M-theory, where spacetime dimensions extend beyond four, spinorial solutions must respect higher-dimensional Clifford algebras \cite{Catto}. The multidimensional Dirac--Hestenes equation can provide a foundation for exploring such extensions, potentially bridging gaps between quantum field theories and general relativity. This aligns with the goals of unified theories, such as those proposed in the context of supergravity or string compactifications, where higher-dimensional geometry plays a pivotal role.

Therefore, there is a growing interest in considering the multidimensional Dirac equation \cite{Chen,Lonigro,Jiang}. The multidimensional Dirac--Hestenes equation is expected to have characteristics similar to the multidimensional Dirac equation. Investigating the Dirac--Hestenes equation provides the advantage of a clearer understanding of geometric concepts in physics. In the classical Dirac--Hestenes equation, the imaginary unit is interpreted as a generator of rotations in the spacelike plane orthogonal to the one containing the electron current and spin vectors. Inspired by this, we use the same imaginary unit equivalent to introduce the multidimensional Dirac--Hestenes equation. For a more detailed description of the advantages of the real equation compared to the complex equation, we refer the reader to the works of Hestenes \cite{hestenes2012clifford, Hestenes, Hestenes2}.

Furthermore, generalizing the properties of the four-dimensional Dirac--Hestenes equation to the multidimensional case is essential. The classical Dirac equation \cite{Thaller} and the corresponding Dirac--Hestenes equation \cite{lasenby1998gravity,Hestenes} have gauge invariance. 
The electromagnetic potential is not uniquely determined, since it is possible to add to it any vector field with a vanishing curl without changing the physical consequences.
We get the analogous statement for the multidimensional Dirac--Hestenes equation.

The complexified geometric algebra $\mathbb{C}\otimes\cl_{1,n}$ has a unique irreducible matrix representation of minimal dimension when $n$ is odd. In the other case, there are two nonequivalent irreducible matrix representations of minimal dimension \cite{Benn,ndimension,Traubenberg}. A multidimensional Dirac spinor can be either a semi-spinor or a double spinor when $n$ is even. Therefore, we consider three cases separately.

The paper is organized as follows. In Section \ref{secMF}, we introduce the multidimensional Dirac equation in the matrix formalism, while in Section \ref{secGAF}, we discuss it in the geometric algebra formalism. In Section \ref{secDec}, we establish lemmas concerning a unique decomposition of an element of the left ideal into the product of the idempotent and an element of the auxiliary even real subalgebra of the geometric algebra $\cl_{1,n}$ in different cases: spinor, semi-spinor, and double spinor. Using these lemmas, we present the multidimensional Dirac--Hestenes equation in Section \ref{secDH} and prove its equivalence to the multidimensional Dirac equation. We show that the multidimensional Dirac--Hestenes equation has the gauge invariance in Section \ref{secInv}. The conclusions follow in Section \ref{secCon}.

This work is an extended version of the short note \cite{Rum} in Conference Proceedings (Empowering Novel Geometric Algebra for Graphics \& Engineering Workshop within
the International Conference Computer Graphics International 2024). In the short note, the case $n=2d-1$ has been examined. In this paper, both cases $n=2d-1$ and $n=2d$ are addressed. Sections \ref{secDec}, \ref{secDH}, and 
\ref{secInv} are extended; Lemmas \ref{lemmasemi}--\ref{decompositiondouble} and Theorems \ref{OddD-Hsemi}--\ref{OddD-Hdouble} are presented for the first time.


\section{Multidimensional Dirac equation in matrix formalism}
\label{secMF}

We consider pseudo-Euclidean space $\mathbb{R}^{1,n}$ with Cartesian coordinates $\{x^{\mu}\}_{\mu=0}^n$. Partial derivatives are denoted by $\partial_\mu = \partial/\partial x^{\mu}$.

The metric tensor $\mathbb{R}^{1,n}$ is given by the diagonal matrix $\eta$, where the first element is $1$ and the remaining elements on the main diagonal are $-1$:
\begin{equation}
    \label{diagonalmatrix}
    \eta=(\eta^{\mu\nu})_{\mu,\nu=0}^n=\operatorname{diag}(1,-1,-1,\ldots,-1).
\end{equation}

Let us consider the multidimensional Dirac equation in the matrix formalism.
We denote the mass of a particle by $m$. For convenience, we assume that Planck constant, the charge of a particle, and the speed of light are equal to $1$.
The electromagnetic vector-potential $\textbf{a}(x)$ depends on a point $x$ of the pseudo--Euclidean space $\mathbb{R}^{1,n}$. In other words, $\textbf{a}(x)=(a_0(x),\ldots,a_n(x)):\mathbb{R}^{1,n}\to\mathbb{R}^{n+1}$. A solution to the Dirac equation is a complex-valued vector function $\varphi(x):\mathbb{R}^{1,n}\to\mathbb{C}^{2^{[(n+1)/2]}}$, where $[n]$ is the integer part of $n$. In the literature,  $\varphi(x)$ is called a wave function or a Dirac spinor. The multidimensional Dirac equation has the form \cite{Chen,Jiang}:
\begin{equation}
    \label{direq}
    \sum_{\mu=0}^{n}{i \gamma^{\mu}(\partial_{\mu}+i  a_{\mu}(x))\varphi(x)=m\varphi(x)},
\end{equation}
where $i$ is imaginary unit and matrices $\{\gamma^{\mu}\}_{\mu=0}^n$ satisfy the following anticommutation relations:
\begin{equation}
    \label{gamma}
    \gamma^{\mu} \gamma^{\nu}+\gamma^{\nu} \gamma^{\mu} =2\eta^{\mu\nu}\mathbb{I}, \quad  \mu,\nu\in\{0,1,\ldots,n\}, 
\end{equation}
where $\mathbb{I}$ is the identity matrix of size $2^{[(n+1)/2]}$. Note that the matrices $\{\gamma^{\mu}\}$ are the Dirac gamma matrices in the special case $n=3$~\cite{Dirac}.

 It is worth noting that unitarily equivalence of the Dirac equation depends on the parity of $n$~\cite{kalf2001essential}. In the case of an odd $n$, for two sets of the matrices $\{\gamma^{\mu}\}$ that satisfy relation \eqref{gamma} and are related by a unitary transformation, the corresponding Dirac equations are also related by the same unitary transformation. However, in the case $n=2d$, the Dirac equation is unitarily equivalent to two Dirac equations. To describe these two equations, we can use block diagonal matrices of size $2^{d+1}$ instead of matrices of size $2^{d}$ as the gamma matrices.

It is evident that the multidimensional Dirac equation has a gauge invariance. The statement can be proven analogously to the approach employed in the four-dimensional case.
An electromagnetic potential $a_{\mu}(x)$ can be replaced by $\Tilde{a}_{\mu}(x) = a_{\mu}(x)-\partial_{\mu} G(x)$, where $e^{i G(x)}$ takes value in $U(1)$. Therefore, if $\varphi(x)$ is a solution to equation~\eqref{direq} with an electromagnetic potential $a_{\mu}(x)$, then $\Tilde{\varphi}(x) = e^{i G(x)}\varphi(x)$ is a solution to equation \eqref{direq} with a shifted electromagnetic potential $\Tilde{a}_{\mu}(x)$. In the paper, we show that the multidimensional Dirac--Hestenes equation also has gauge invariance.


\section{Multidimensional Dirac equation in geometric algebra formalism}
\label{secGAF}

Using a geometric algebra is one of the ways to investigate issues in modern mathematical physics \cite{hestenes2012clifford,Hestenes2,Lasenby}. We consider the real geometric algebra $\cl_{1,n}$. The generators $e^0,e^1,\ldots,e^n$ satisfy the following anticommutation relations:
\begin{equation}
    \label{generator}
    e^{\mu} e^{\nu}+e^{\nu} e^{\mu} =2\eta^{\mu\nu}e, \quad  \mu,\nu\in\{0,1,\ldots,n\},
\end{equation}
where $\eta$ is diagonal matrix \eqref{diagonalmatrix} and $e$ is the identity element. Note that relations (\ref{gamma}) and (\ref{generator}) are similar. It means that the generators of the geometric algebra satisfy the same anticommutation relations as the matrices $\{\gamma^{\mu}\}$. Therefore, it becomes possible to consider the multidimensional Dirac equation not only in the matrix formalism but also in the geometric algebra formalism.

The basis of considering geometric algebra $\cl_{1,n}$ consists of all possible ordered products of the generators: 
\begin{eqnarray}
e^{\mu_1}e^{\mu_2}\cdots e^{\mu_k}=e^{\mu_1 \mu_2\ldots \mu_k},\quad 0 \leq \mu_1<\mu_2<\cdots<\mu_k\leq n.
\end{eqnarray}

Hence, the basis decomposition of an element $U\in\cl_{1,n}$ (it is also called a multivector) is:
\begin{eqnarray}
    U = u e +\sum_{\mu=0}^n u_{\mu} e^{\mu} +\sum_{\mu,\nu=0,\mu< \nu}^n u_{\mu\nu} e^{\mu\nu}+\cdots+u_{01\ldots n} e^{01\ldots n},
\end{eqnarray}
where $u,u_{\mu},u_{\mu\nu},\ldots,u_{01\ldots n}$ are real scalars. The basis decomposition can be rewritten using multi-indices:
\begin{equation}
\label{U=sum}
    U = \sum_{M} u_{M} e^M, \quad u_M\in\mathbb{R},
\end{equation}
where $M= \mu_1 \mu_2\ldots \mu_k$. We denote the length of multi-index $M$ by $|M|=k$, $k=0,1,\ldots,n+1$. A multi-index is called even if its length is even, and odd otherwise. When the length of a multi-index $M$ is zero, the corresponding $e^M$ represents the identity element $e$. In contrast, if $M=0$ as an individual index, then its length is considered to be $1$ and $e^0$ acts as a generator of the algebra. Note, the dimension of $\cl_{1, n}$ is $2^{n+1}$.

We denote a subalgebra of the geometric algebra $\cl_{1,n}$ constructed on selected generators. For instance, if a subalgebra is constructed only on the generators with even indices, we denote it by:
\begin{eqnarray}
\cl(e^0,e^2,e^4,e^6,\cdots)	\subset \cl_{1,n}.
\end{eqnarray}

Let $\cl_{1,n}^{(0)}$ be an even subalgebra of $\cl_{1,n}$ that is a linear span of the basis elements with even multi-indices:
\begin{eqnarray}
    \cl_{1,n}^{(0)}=\{U\in \cl_{1,n}| U = \sum_{|M|=2k} u_M e^M\}, \quad \operatorname{dim}\cl_{1,n}^{(0)} = 2^{n}.
\end{eqnarray}
An element of the even subalgebra $\cl_{1,n}^{(0)}$  is called an even element.

Let $\cl_{1,n}^{(1)}$ be an odd subspace of $\cl_{1,n}$. The basis of the odd subspace $\cl_{1,n}^{(1)}$ consists of the basis elements of $\cl_{1,n}$ with odd multi-indices:
\begin{eqnarray}
    \cl_{1,n}^{(1)}=\{U\in \cl_{1,n}| U = \sum_{|M|=2k-1} u_M e^M\}, \quad \operatorname{dim}\cl_{1,n}^{(1)} = 2^{n}.
\end{eqnarray}
An element of the odd subspace $\cl_{1,n}^{(1)}$  is called an odd element. 

To describe a basis of a space to which a Dirac spinor belongs, we introduce a projection operation onto the subspace of zero grade. If a multivector has decomposition \eqref{U=sum}, then the operation has the form:
\begin{eqnarray}
    \langle U \rangle = u.
\end{eqnarray}

In the paper, we also consider the complexified geometric algebra $\mathbb{C}\otimes\cl_{1,n}$. The basis decomposition of $U\in\mathbb{C}\otimes\cl_{1,n}$ is similar to decomposition (\ref{U=sum}) but the constants $\{u_M\}$ are complex scalars.

Let us introduce an operation of Hermitian conjugation in complexified geometric algebras. Actually, its definition depends on the signature. For the signature $(1,n)$, the operation has the form \cite{ndimension,Shirokov}:
\begin{eqnarray}
    U^{\dagger}=e^0 U^* e^0,
\end{eqnarray}
where the star denotes the superposition of reversion and complex conjugation:
\begin{eqnarray}
    U^* = \sum_{M} (-1)^{\frac{|M|(|M|-1)}{2}}\bar{u}_{M} e^M, \quad u_M\in\mathbb{C}.
\end{eqnarray}

The complexified geometric algebra $\mathbb{C}\otimes\cl_{1,n}$ is inner product space with the inner product:
\begin{equation}
\label{inner}
    (U,V) = \langle U^{\dagger} V\rangle,\quad U,V\in \mathbb{C}\otimes\cl_{1,n}.
\end{equation}

We consider the Hermitian idempotent $t$:
\begin{eqnarray}
   t^2=t,\quad t^{\dagger}=t,
\end{eqnarray}
and the corresponding left ideal $L(t)$ generated by $t$:
\begin{eqnarray}
    L(t) = \{U\in \mathbb{C} \otimes \cl_{1,n}| U t = U\}.
\end{eqnarray}

If a left ideal $L(t)$ does not contain another left ideal except itself and $L(0)$, then it is called a minimal left ideal. The corresponding idempotent $t$ is called a primitive idempotent.

We consider the Dirac equation in the geometric algebra formalism. It is convenient to investigate a multidimensional Dirac spinor as an element of the left ideal $L(t)$ \cite{Lounesto,Benn,Riesz}: 
\begin{eqnarray}
   \varphi(x): \mathbb{R}^{1,n} \to L(t). 
\end{eqnarray}

Actually, there is a difference between cases $n=2d-1$ and $n=2d$ due to the isomorphism between complexified geometric algebras and matrix algebras \cite{ndimension,Sobczyk}:
\begin{equation}
    \mathbb{C}\otimes\cl_{1,2d-1}\simeq \operatorname{Mat}(2^{d},\mathbb{C}),
\end{equation}
\begin{equation}
\mathbb{C}\otimes\cl_{1,2d}\simeq \operatorname{Mat}(2^{d},\mathbb{C})\oplus \operatorname{Mat}(2^{d},\mathbb{C}).
\end{equation}
In the first case, a Dirac spinor belongs to a minimal left ideal, while in the latter case, a Dirac spinor might belong not only to a minimal left ideal. There are two types of Dirac spinor for $n=2d$: a semi-spinor and a double spinor. In the matrix representation, the semi-spinor consists of two block matrices --- a matrix with one non-zero column and a zero matrix. The double spinor also consists of two block matrices, each of which belongs to a minimal left ideal. 

The multidimensional Dirac equation with an electromagnetic vector-potential $\textbf{a}(x)$ in the geometric algebra formalism is:
\begin{equation}
\label{Dirac}
    \sum_{\mu=0}^{n}{i e^{\mu}(\partial_{\mu}+i  a_{\mu}(x))\varphi(x)=m\varphi(x)}.
\end{equation}

Typically, a spinor is understood as a complex column vector. In the formalism of geometric algebra, however, a spinor is represented as an element of a minimal left ideal. There is a direct connection between these two approaches: in the matrix representation, an element of the minimal left ideal corresponds to a matrix with only one nonzero column.


\section{Decomposition of an element of the left ideal}
\label{secDec}

Initially, we remind several facts for the special case $n=3$. 
It is known that an element of the minimal left ideal $L(t)$ has a unique decomposition into the product of an element of the even real subalgebra $\cl^{(0)}_{1,3}$ and the corresponding idempotent $t$~\cite{Lounesto,marchuk2012field}.

\begin{lem}
\label{n=3}
Let $L(t)$ be the minimal left ideal generated by the idempotent $t$:
\begin{eqnarray}
    t = \frac{1}{4}(e+e^0)(e+i e^{12})\in \mathbb{C}\otimes\cl_{1,3}.
\end{eqnarray}
Then there is the unique decomposition:
\begin{eqnarray}
    \forall \varphi \in L(t) \quad \exists! \Psi \in \cl^{(0)}_{1,3}:\quad \varphi = \Psi t.
\end{eqnarray}
\end{lem}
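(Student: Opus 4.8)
The plan is to verify the claim by a dimension count together with an explicit check that the natural map $\cl^{(0)}_{1,3}\to L(t)$, $\Psi\mapsto \Psi t$, is injective; surjectivity then follows automatically. First I would compute $\dim_{\mathbb R}\cl^{(0)}_{1,3}=2^3=8$. On the other hand, $t$ is a primitive idempotent in $\mathbb C\otimes\cl_{1,3}\simeq\operatorname{Mat}(4,\mathbb C)$, so the minimal left ideal $L(t)$ is isomorphic as a complex vector space to a single column of $\operatorname{Mat}(4,\mathbb C)$, i.e. $\dim_{\mathbb C}L(t)=4$, hence $\dim_{\mathbb R}L(t)=8$. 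Thus the source and target have the same real dimension, and it suffices to prove that $\Psi t=0$ with $\Psi\in\cl^{(0)}_{1,3}$ forces $\Psi=0$.

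For injectivity I would exploit the concrete form $t=\tfrac14(e+e^0)(e+ie^{12})$. The key structural fact is that $e^0$ and $e^{12}$ commute, both square to $e$ (note $e^0 e^0 = e$ and $(e^{12})^2 = e^1e^2e^1e^2 = -(e^1)^2(e^2)^2 = -(-1)=e$ is wrong sign-wise, so more carefully $e^{12}e^{12}=-e$; I will instead use $ie^{12}$, which satisfies $(ie^{12})^2=e$), so $\tfrac12(e+e^0)$ and $\tfrac12(e+ie^{12})$ are commuting idempotents and $t$ is their product. I would argue that right multiplication by $t$ is, up to the isomorphism with $\operatorname{Mat}(4,\mathbb C)$, projection onto a one-dimensional column space, and that the real even subalgebra $\cl^{(0)}_{1,3}$ maps onto this column space bijectively. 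Concretely, one picks a basis of $\cl^{(0)}_{1,3}$ — the $8$ even basis blades $e,e^{01},e^{02},e^{03},e^{12},e^{13},e^{23},e^{0123}$ — multiplies each on the right by $t$, and checks that the resulting $8$ real-linear combinations of the $4$ complex basis vectors of $L(t)$ are real-linearly independent. Equivalently, and more cleanly, I would identify $L(t)$ with $\cl^{(0)}_{1,3}t$ by showing the latter is already a left ideal of the same dimension: since $e^0 t = t$ and $ie^{12}t=t$, every element $U t$ with $U\in\cl_{1,3}$ can be rewritten, using these two relations to eliminate $e^0$ and to replace $e^{12}$, as $\Psi t$ for some even $\Psi$; this shows $\cl^{(0)}_{1,3}t = \cl_{1,3}t = L(t)$, giving surjectivity directly, and then equality of dimensions forces injectivity.

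I expect the main obstacle to be the bookkeeping in the reduction step: one must show that multiplying an arbitrary blade $e^M$ on the right by $t$ always lands in $\cl^{(0)}_{1,3}t$, and that the span of $\{e^M t : |M| \text{ even}\}$ has full real dimension $8$ inside $L(t)$. The relations $e^0 t=t$ and $(ie^{12})t=t$ let one trade an odd generator $e^0$ for the identity and trade $e^{12}$ for $-i\,e$ (as operators acting on $t$ from the left after moving them past), but one has to be careful with signs coming from anticommuting generators past $e^0$ and $e^{12}$, and with the fact that $i$ is a genuine scalar here rather than an element of the algebra — so "even" must mean even in the $\cl_{1,3}$-grading while complex coefficients are freely allowed. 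Once the reduction is organized (e.g. by treating the four cases according to whether $e^M$ contains $e^0$ and/or an odd/even number of indices from $\{1,2\}$), the linear independence is immediate because the four surviving real degrees of freedom pair off into real and imaginary parts of two complex coordinates, matching $\dim_{\mathbb R}L(t)=8$.
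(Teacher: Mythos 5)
Your proposal is correct in substance and overlaps heavily with the paper's own argument, differing mainly in how uniqueness is obtained. The paper proves this lemma (in the specialization $d=2$ of Lemmas \ref{Yt=0} and \ref{decomposition}, where $Q'=\cl_{1,3}$) by establishing existence exactly as you do --- every blade times $t$ is made real and even using $e^0t=t$ and $it=-e^{12}t$, organized through the orthonormal basis $\tau^\mu=2f^\mu t$ of $L(t)$ --- but it proves uniqueness by a direct kernel computation (Lemma \ref{Yt=0}): split $Y\in\cl^{(0)}_{1,3}$ according to its dependence on $e^1$, $e^2$, $e^{12}$, multiply by $t$, and read off from orthonormality that all coefficients vanish. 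You instead get uniqueness for free from the dimension count $\dim_{\mathbb R}\cl^{(0)}_{1,3}=\dim_{\mathbb R}L(t)=8$ together with surjectivity (or, in your first variant, surjectivity for free from the explicit independence check of the eight images $e^Mt$). This shortcut is legitimate and is even remarked on in the paper, but it is special to $n=3$: for $d>2$ one has $\dim\cl^{(0)}_{1,2d-1}=2^{2d-1}>2^{d+1}=\dim_{\mathbb R}L(t)$, which is exactly why the paper introduces $Q'$ and proves the kernel lemma directly; the paper's route generalizes, the dimension-count route does not.

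Two points in your write-up need tightening. First, the remark that ``complex coefficients are freely allowed'' must not refer to $\Psi$ itself: the entire content of the lemma is that $\Psi$ is \emph{real} and even, and if you allowed $\Psi\in\mathbb{C}\otimes\cl^{(0)}_{1,3}$ uniqueness would be false (for instance $(e^{12}+ie)t=0$) and your dimension count ($16\neq 8$) would no longer yield injectivity. The correct handling is the paper's property $it=It$ with $I=-e^{12}$: complex scalars acting on $t$ are absorbed into the real even element $I$, and this is precisely what makes the reduction land in $\cl^{(0)}_{1,3}$ rather than in its complexification. For the same reason, the chain $\cl^{(0)}_{1,3}t=\cl_{1,3}t=L(t)$ requires the explicit step $L(t)=(\mathbb{C}\otimes\cl_{1,3})t=\cl_{1,3}t$, again via $it=-e^{12}t$; you have the needed relation but should state it. Finally, the tally in your last sentence should be eight real degrees of freedom pairing into the four complex coordinates of $L(t)$, not four into two.
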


Using Lemma \ref{n=3}, it can be figured out that it is possible to obtain a solution $\Psi(x)$ to the Dirac--Hestenes equation from a solution $\psi(x)$ to the Dirac equation, and conversely \cite{Hestenes,Hestenes2}. The Dirac--Hestenes equation has the form:
\begin{eqnarray}
    \sum_{\mu = 0}^3 e^{\mu} (\partial_{\mu} \Psi(x)  - \Psi(x) a_{\mu}(x) e^{12}) e^0 = m \Psi(x) e^{12},\quad \Psi(x)\in \cl^{(0)}_{1,3}.
\end{eqnarray}

Actually, the matrix representation of $\mathbb{C}\otimes\cl_{1,n}$ depends on the parity of $n$. In the case $n=2d-1$, there is only one irreducible matrix representation of minimal dimension. In the other case $n=2d$, there are two non-equivalent irreducible matrix representations of minimal dimension. Therefore, we analyze these two cases separately.

\subsection{The case $n=2d-1$}

In this subsection, we introduce Lemma \ref{decomposition} that is a generalization of Lemma~\ref{n=3} to the multidimensional case $n=2d-1$.

The real dimensions of the left ideal $L(t)$ and the even real subalgebra $\cl^{(0)}_{1,3}$ are the same: 
\begin{eqnarray}
   \operatorname{dim}_{\mathbb{R}} L(t) = \operatorname{dim}_{\mathbb{R}}\cl^{(0)}_{1,3}=8. 
\end{eqnarray}
However, the equality of the dimensions does not hold for $d>2$ since:
\begin{eqnarray}
   \operatorname{dim}_{\mathbb{R}} L(t) = 2^{d+1}, \quad \operatorname{dim}_{\mathbb{R}} \cl^{(0)}_{1,2d-1} = 2^{2d-1}. 
\end{eqnarray}
Hence, if we replace $\cl_{1,3}$ by $\cl_{1,2d-1}$, then the similar statement to Lemma~\ref{n=3} will not be valid for the case $d>2$.
Therefore, it is necessary to introduce another real subalgebra to which $\Psi(x)$ belongs. This subalgebra has a smaller dimension than the even real subalgebra $\cl^{(0)}_{1, 2d-1}$.

One of the ways to fix the primitive Hermitian idempotent $t\in  \mathbb{C} \otimes \cl_{1,2d-1}$ is \cite{Shirokov}:
\begin{equation}
\label{event}
    t = \frac{1}{2}(e+ e^0)\prod\limits_{\mu = 1}^{d-1}\frac{1}{2}(e+i e^{2\mu-1}e^{2\mu}) \in  \mathbb{C} \otimes \cl_{1,2d-1}.
\end{equation}
In this subsection, we consider the minimal left ideal $L(t)$ generated by (\ref{event}).

An important benefit of applying geometric algebra is the ability to interpret geometric results when the Dirac--Hestenes equation is analyzed. We use the fixed multivector $I$ for reducing the Dirac equation to a form where all values are real:
\begin{equation}
    \label{it}
    I:= -e^{12},\quad it= It=tI.
\end{equation}

To introduce the multidimensional Dirac--Hestenes equation where a wave function $\Psi(x)$ consists only of basis elements with even multi-indices, we also use another fixed multivector $E$:
\begin{equation}
    \label{et}
    E:= e^0,\quad t= Et=tE.
\end{equation}

For an explicit description of the minimal left ideal $L(t)$, it is convenient to consider the auxiliary algebra $Q$ which is generated by generators with odd indices:
\begin{eqnarray}
    Q = \cl(e^1,e^3,\ldots,e^{2d-3},e^{2d-1})\subset \mathbb{C}\otimes\cl_{1,2d-1}, \quad \operatorname{dim}_{\mathbb{C}} Q = 2^{d}.
\end{eqnarray}
The corresponding basis elements, which are ordered products of the generators $e^1,e^3,\cdots,e^{2d-3},e^{2d-1}$, are denoted by $\{c^\mu\}_{\mu=1}^{2^d}$. 
Therefore, the orthonormal basis of the left ideal $L(t)$ has the form \cite{Shirokov}:
\begin{eqnarray}
    \tau^{k}=(\sqrt{2})^{d} c^{k} t, \quad k = 1,2,\ldots, 2^{d},
\end{eqnarray}
\begin{eqnarray}
    (\tau^{j},\tau^{k})=\delta^{jk}, \quad  j= 1,2,\ldots, 2^{d},
\end{eqnarray}
where $\delta^{\mu\nu}$ is Kronecker delta and the parentheses denote inner product \eqref{inner}.

Since a solution to the Dirac--Hestenes equation should be an even element that belongs to a real geometric algebra, we consider the following real subalgebra~$Q'$:
\begin{eqnarray}
    Q'=\cl(e^0,e^1,e^2,e^3,e^5,\ldots,e^{2d-3},e^{2d-1})\subset \cl_{1,2d-1}, \operatorname{dim}_{\mathbb{R}} Q'= 2^{d+2}.
\end{eqnarray}
It means that its generators are $e^0$, $e^2$, and generators with odd indices. Note that we get the algebra $Q' = \cl_{1,3}$ for $d=2$ and:
\begin{eqnarray}
    \operatorname{dim}_{\mathbb{R}} L(t) = \operatorname{dim}_{\mathbb{R}} Q'^{(0)} = 2^{d+1}.
\end{eqnarray}

In Lemma \ref{decomposition} for the multidimensional case $n=2d-1$, which is a generalization of Lemma \ref{n=3}, the even subalgebra $Q'^{(0)}$ is used instead of the even subalgebra $\cl_{1,3}^{(0)}$. 
Lemma \ref{Yt=0} is used to prove the uniqueness of the decomposition in Lemma~\ref{decomposition} and to construct the multidimensional Dirac--Hestenes equation.

\begin{lem}
\label{Yt=0}
    Let $Q'$ be $\cl(e^0,e^1,e^2,e^3,e^5,e^7,\ldots,e^{2d-1})$ and $t$ have form~\eqref{event}. If $Y\in~Q'^{(0)}$ and $Yt=0$, then $Y=0$.
\end{lem}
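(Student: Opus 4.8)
The plan is to exploit the interplay between the real subalgebra $Q'^{(0)}$ and the complex left ideal $L(t)$ by comparing dimensions carefully. First I would establish that $Q'^{(0)}$ has real dimension $2^{d+1}$. Indeed, $Q'$ has $d+2$ generators ($e^0,e^1,e^2,e^3$ and then $e^5,e^7,\dots,e^{2d-1}$, the latter being $d-2$ generators), so $\dim Q' = 2^{d+2}$ and $\dim Q'^{(0)} = 2^{d+1}$, which matches $\dim_{\mathbb{R}} L(t) = 2^{d+1}$. The key observation is that the map $Y \mapsto Yt$ sends $Q'^{(0)}$ into $L(t)$, and since the two spaces have the same real dimension, proving this map is \emph{surjective} (equivalently, that $\{Y t : Y \in Q'^{(0)}\}$ spans $L(t)$ over $\mathbb{R}$) would immediately force injectivity, hence $Yt = 0 \implies Y = 0$. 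Alternatively, and perhaps more directly, one shows injectivity head-on.

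For the direct route I would argue as follows. Recall that $it = It = tI$ with $I = -e^{12}$, and $t = Et = tE$ with $E = e^0$; also the basis elements $c^\mu$ of $Q = \cl(e^1,e^3,\dots,e^{2d-1})$ give an orthonormal basis $\tau^\mu = (\sqrt2)^d c^\mu t$ of $L(t)$. The strategy is to show that any $Y \in Q'^{(0)}$ can be rewritten, modulo the relations $Et = t$ and $It = t$ (equivalently $e^0 t = t$, $e^{12} t = -t$, so $e^2 t = -e^1 t$ after multiplying $e^{12}t=-t$ by $e^1$ on the left and using $e^1e^1 = -e$), in the form $(\text{element of } Q)\cdot t$ plus something manifestly annihilating $t$; then if $Yt=0$ one reads off that all the $Q$-coefficients vanish, and tracking the reduction backwards shows $Y = 0$. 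Concretely, every even basis monomial of $Q'$ is, up to sign and the substitutions $e^0 \leftrightarrow e$ and $e^2 \leftrightarrow -e^1$ valid when acting on $t$ from the left, equal to a monomial in $e^1,e^3,\dots,e^{2d-1}$ times $t$; the point is that these substitutions, applied to the $2^{d+1}$ even monomials of $Q'$, land bijectively (up to scalar) on the $2^d$ elements $c^\mu t$ — with each $c^\mu t$ hit exactly twice, once "as is" and once with an extra $e^0$ or with $e^2$ in place of $e^1$ — but these two preimages differ, so a vanishing linear combination $Yt = 0$ forces cancellation only if the corresponding $Q'^{(0)}$-coefficients already vanish.

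I expect the main obstacle to be the bookkeeping in the previous paragraph: namely, verifying that the natural $2$-to-$1$ correspondence between even basis monomials of $Q'$ and the basis $\{c^\mu t\}$ of $L(t)$ is set up correctly, with consistent signs, and that the two preimages of each $c^\mu t$ are genuinely independent over $\mathbb{R}$ as elements of $Q'^{(0)}$ (so that no nontrivial cancellation in $L(t)$ can come from a nonzero $Y$). A clean way to organize this is to split $Q'^{(0)}$ as a direct sum indexed by the presence or absence of $e^0$, and within the $e^0$-free part by the presence or absence of $e^2$; on each of the four pieces the map to $L(t)$ is, after the allowed left-multiplication substitutions on $t$, an isometry onto a copy of $\mathrm{span}\{c^\mu t\}$ scaled by a fixed constant, and one checks the four images, reassembled, still span $L(t)$ freely enough that only $Y = 0$ maps to $0$. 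Once this structural picture is in place the conclusion $Yt = 0 \implies Y = 0$ is immediate, and the dimension count $\dim Q'^{(0)} = \dim_{\mathbb{R}} L(t) = 2^{d+1}$ confirms there is no room for a kernel.
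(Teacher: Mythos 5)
Your overall architecture is the same as the paper's: the dimension count $\dim_{\mathbb{R}}Q'^{(0)}=\dim_{\mathbb{R}}L(t)=2^{d+1}$, the splitting of an even element of $Q'$ according to the presence of $e^0$, $e^1$, $e^2$, the reduction of $Yt$ onto the orthonormal basis $\tau^{\mu}=(\sqrt{2})^{d}c^{\mu}t$, and the conclusion by linear independence. But there is a genuine gap precisely in the step that has to do all the work. The reduction relations you state are wrong: the paper's property \eqref{it} is $it=It$ with $I=-e^{12}$, i.e.\ $e^{12}t=-i\,t$, and hence (left-multiplying by $e^1$ and using $e^1e^1=-e$) $e^{2}t=i\,e^{1}t$. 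You instead assert $It=t$, $e^{12}t=-t$ and $e^{2}t=-e^{1}t$, dropping the imaginary unit (and even from your own starting point $e^{12}t=-t$, left multiplication by $e^1$ gives $e^{2}t=+e^{1}t$). This is not a cosmetic slip: with purely real substitution rules of that kind the lemma would be false, since for example $(e+e^{12})t$ would vanish under the rule $e^{12}t=-t$, producing a nonzero element of the kernel; in reality $(e+e^{12})t=(1-i)t\neq 0$. The Remark following the lemma in the paper shows that such real cancellations genuinely occur once one leaves $Q'^{(0)}$ (e.g.\ $Y=e^{12}-e^{34}$ in $\cl^{(0)}_{1,5}$), so the mechanism that excludes them cannot be waved through.

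Relatedly, your justification for injectivity --- each $c^{\mu}t$ has two preimages among the even monomials of $Q'$ and ``these two preimages differ, so the coefficients already vanish'' --- is not a valid argument: two distinct elements of $Q'^{(0)}$ can have $\mathbb{R}$-proportional images under $Y\mapsto Yt$, and then a real combination of them lies in the kernel. What actually prevents this, and what the paper's proof establishes, is that the two preimages of each $c^{\mu}t$ map to a \emph{real} multiple and a \emph{purely imaginary} multiple of it, respectively. Writing $Y=\sum_{\mu}\bigl(y_{\mu}h^{\mu}+y_{\mu 12}h^{\mu}e^{12}+y_{\mu 1}g^{\mu}e^{1}+y_{\mu 2}g^{\mu}e^{2}\bigr)$ with $h^{\mu},g^{\mu}$ the $e^{1},e^{2}$-free even and odd monomials, the correct relations $e^{12}t=-it$ and $e^{2}t=ie^{1}t$ give $Yt=\sum_{\mu}\bigl((y_{\mu}-iy_{\mu 12})h^{\mu}t+(y_{\mu 1}+iy_{\mu 2})g^{\mu}e^{1}t\bigr)$; since $\{h^{\mu}\}\cup\{g^{\mu}e^{1}\}$ is, up to sign, the set $\{f^{\mu}\}$ of \eqref{Fk} whose images $(\sqrt{2})^{d}f^{\mu}t=\tau^{\mu}$ are $\mathbb{C}$-linearly independent, all $2^{d+1}$ real coefficients vanish and $Y=0$. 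Restoring the factor $i$ and replacing ``the preimages differ'' by this real/imaginary separation is exactly what is needed to turn your outline into a complete proof.
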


\begin{rem}
  If we replace $Q'^{(0)}$ by $\cl^{(0)}_{1,2d-1}$ in the statement of Lemma \ref{Yt=0}, then the new statement will not be valid. Let us present an example for the case $d=3$. If $Y = e^{12}-e^{34} \in \cl^{(0)}_{1,5}$, then we get $Yt = -it+it=0$.
\end{rem}

\begin{proof}
An element $c^{k}$ is an odd or even basis element of the algebra $Q$. Since $Y$ belongs to $Q'^{(0)}$, it is convenient to consider auxiliary elements $\{f^{k}\}$ instead of $\{c^{k}\}$. The elements $\{f^{k}\}$ belong to $Q'^{(0)}$:
  \begin{equation}
    \label{Fk}
    f^{k} = \begin{cases}
    c^{k},\quad \text{if $c^{k}$ is even},\\
    c^{k} e^0,\quad \text{if $c^{k}$ is odd}.
\end{cases}
\end{equation}

Using property (\ref{et}), we rewrite the basis $\{\tau^{k}\}$ of $L(t)$ in the following form:
\begin{equation}
\label{tauF}
    \tau^{k}=(\sqrt{2})^{d} f^{k} t, \quad k = 1,2,\ldots, 2^{d}.
\end{equation}
    
     Let us decompose $Y$ into four sums: the first sum contains neither $e^1$ nor $e^2$; the second sum, conversely, contains $e^{1}$ and $e^2$; the third sum contains $e^1$ and does not contain $e^2$; the fourth sum, conversely, contains $e^2$ and does not contain~$e^1$. 
     
     Let $\{h^{k}\}$ be a subset of $\{f^{k}\}$ such that $h^{k}$ does not contain the generator $e^1$. From the construction of $\{f^{k}\}$ it follows that $\{h^{k}\}$ are even elements and do not contain the generator $e^2$. 
    Also, let $\{g^{k}\}$ be odd basis elements of the algebra $Q'$ that do not contain the generators $e^1$ and $e^2$. Thus, an element $Y\in Q'^{(0)}$ has the following basis decomposition:
    \[
    Y = \sum_{k=1}^{2^{d-1}}\left( y_{k} h^{k} + y_{k12}h^{k} e^{12}+ y_{k1} g^{k} e^{1}+  y_{k2}g^{k} e^{2}\right),\quad y_{k},\,y_{k12},\,y_{k1},\,y_{k2}\in\mathbb{R}.
    \]

Let us multiply both sides of this equality on the right by primitive Hermitian idempotent (\ref{event}). Taking into account property (\ref{it}) and the fact $e^1 e^1 = -e$, we get:
    \begin{eqnarray*}
&Yt &= \sum_{k=1}^{2^{d-1}}\left( y_{k} h^{k} t - y_{k12}h^{k} i t + y_{k1} g^{k} e^{1} t- y_{k2}g^{k} e^1e^{12} t\right)=\\
      &  & = \sum_{k=1}^{2^{d-1}} \left((y_{k} - i y_{k12})h^{k} t +  (y_{k1}+i y_{k2})g^{k} e^{1} t\right).
\end{eqnarray*}

Actually, the union of $\{h^{k}\}$ and $\{g^{k} e^1\}$ is the set $\{f^{k}\}$ up to sign. Using~\eqref{tauF}, we obtain:
$$ Yt =  \sum_{k=1}^{2^{d}} (\alpha_{k}+i \beta_{k}) \tau^{k}, \quad \alpha_{k},\,\beta_{k}\in\mathbb{R}.$$
If $Y t =0$, we get $\alpha_{k}=\beta_{k}=0$. Hence, $Y=0$.

\end{proof}

\begin{lem}
\label{decomposition}
Let $Q'$ be $\cl(e^0,e^1,e^2,e^3,e^5,e^7,\ldots,e^{2d-1})$ and $L(t)$ be the minimal left ideal generated by idempotent $t$ \eqref{event}. Then:
\begin{equation}
\label{varphi}
   \forall \varphi \in L(t)\quad \exists! \Psi\in Q'^{(0)}:\quad \varphi = \Psi t.
\end{equation}
\end{lem}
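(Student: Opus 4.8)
The plan is to establish existence and uniqueness separately, with uniqueness handled immediately by Lemma~\ref{Yt=0}. For uniqueness: if $\varphi = \Psi_1 t = \Psi_2 t$ with $\Psi_1, \Psi_2 \in Q'^{(0)}$, then $(\Psi_1 - \Psi_2) t = 0$ with $\Psi_1 - \Psi_2 \in Q'^{(0)}$, so Lemma~\ref{Yt=0} forces $\Psi_1 = \Psi_2$. This is the easy half.

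For existence, the strategy is a dimension-counting argument combined with the map $\Psi \mapsto \Psi t$. Consider the real-linear map $\rho: Q'^{(0)} \to L(t)$ defined by $\rho(\Psi) = \Psi t$. Lemma~\ref{Yt=0} says precisely that $\ker \rho = 0$, so $\rho$ is injective. Now $\operatorname{dim}_{\mathbb{R}} Q'^{(0)} = 2^{d+1}$ (half the dimension $2^{d+2}$ of $Q'$, since the even subalgebra of a real geometric algebra on $d+2$ generators has dimension $2^{d+1}$), and $\operatorname{dim}_{\mathbb{R}} L(t) = 2^{d+1}$ as noted in the text (the orthonormal basis $\{\tau^\mu\}_{\mu=1}^{2^d}$ over $\mathbb{C}$ gives real dimension $2 \cdot 2^d = 2^{d+1}$). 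Since $\rho$ is an injective linear map between real vector spaces of equal finite dimension, it is an isomorphism; in particular it is surjective, which gives existence: for every $\varphi \in L(t)$ there is $\Psi \in Q'^{(0)}$ with $\varphi = \Psi t$.

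Alternatively, if one prefers a constructive proof of surjectivity that parallels the computation in the proof of Lemma~\ref{Yt=0}: given $\varphi \in L(t)$, expand $\varphi = \sum_{\mu=1}^{2^d}(\alpha_\mu + i\beta_\mu)\tau^\mu$ in the orthonormal basis, and then read the displayed formula $Yt = \sum_{\mu=1}^{2^d}(\alpha_\mu + i\beta_\mu)\tau^\mu$ from that proof backwards: choosing the real coefficients $y_\mu, y_{\mu 12}, y_{\mu 1}, y_{\mu 2}$ so that $y_\mu - i y_{\mu 12}$ and $y_{\mu 1} + i y_{\mu 2}$ match the prescribed $\alpha_\mu + i\beta_\mu$ (one such pair for each $h^\mu$-term, one for each $g^\mu e^1$-term) produces an explicit $\Psi = Y \in Q'^{(0)}$ with $\Psi t = \varphi$. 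I expect the cleaner writeup to be the dimension argument, invoking Lemma~\ref{Yt=0} as a black box.

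The main obstacle is not in the logic — injectivity plus equal dimensions is routine — but in making sure the dimension bookkeeping is airtight: one must confirm that $Q'$ genuinely has $d+2$ generators (namely $e^0, e^2$, and the $d$ odd-index generators $e^1, e^3, \ldots, e^{2d-1}$), hence $\operatorname{dim} Q' = 2^{d+2}$ and $\operatorname{dim} Q'^{(0)} = 2^{d+1}$, and that this matches $\operatorname{dim} L(t) = 2^{d+1}$. Since both of these facts are already recorded in the text preceding the lemma, the proof itself is short.
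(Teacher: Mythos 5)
Your uniqueness argument is exactly the paper's: subtract the two decompositions and apply Lemma~\ref{Yt=0} to $(\Psi_1-\Psi_2)t=0$. For existence, however, you take a genuinely different route. The paper is constructive: it expands $\varphi=\sum_{\mu}(\alpha_\mu+i\beta_\mu)\tau^\mu=(\sqrt2)^d\sum_\mu(\alpha_\mu+i\beta_\mu)f^\mu t$ and uses the commutation property $it=It=tI$ from \eqref{it} to move the imaginary unit through $t$, producing the explicit element $\Psi=(\sqrt2)^d\sum_\mu f^\mu(\alpha_\mu+I\beta_\mu)\in Q'^{(0)}$ with $\Psi t=\varphi$. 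You instead argue abstractly: the real-linear map $\rho\colon Q'^{(0)}\to L(t)$, $\Psi\mapsto\Psi t$, is injective by Lemma~\ref{Yt=0}, and since $\operatorname{dim}_{\mathbb{R}}Q'^{(0)}=2^{d+1}=\operatorname{dim}_{\mathbb{R}}L(t)$ (both facts are stated in the text before the lemma), injectivity forces surjectivity. This is correct and arguably cleaner as pure logic, but it is non-constructive: the paper's version hands you the explicit formula for $\Psi$ in terms of the coordinates of $\varphi$, which is what gets used later (e.g.\ the explicit $\Psi$ in the graphene example), whereas your argument buys brevity at the cost of that formula. Your sketched constructive alternative (reading the $Yt$ computation of Lemma~\ref{Yt=0} backwards) is essentially the paper's existence proof in disguise, just routed through the four-block decomposition of $Y$ rather than directly through the substitution $i\mapsto I$ in the $\tau^\mu$-expansion. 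One small point to make airtight in the dimension version: you should note that the $d+2$ generators of $Q'$ generate a subalgebra of full dimension $2^{d+2}$ because its basis monomials are among the basis elements of $\cl_{1,2d-1}$; the paper records this dimension, so citing it suffices.
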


\begin{proof}
First, we show the existence of decomposition \eqref{varphi}. The basis decomposition of $\varphi$ is:
\begin{equation*}
    \varphi= \sum_{k=1}^{2^{d}}(\alpha_{k} + i \beta_{k}) \tau^k= (\sqrt{2})^{d}\sum_{k=1}^{2^{d}}(\alpha_{k} + i \beta_{k}) f^{k} t,
\end{equation*}
where $\alpha_{k},\, \beta_{k}\in\mathbb{R}$, and $f^{k}$ is defined in \eqref{Fk}. Using property (\ref{it}), we get:
 \[\varphi =  (\sqrt{2})^{d}\left(\sum_{k=1}^{2^{d}} f^{k}(\alpha_{k} + I\beta_{k})\right) t \Rightarrow \Psi  = (\sqrt{2})^{d}\sum_{k=1}^{2^{d}} f^{k}(\alpha_{k} + I\beta_{k})\in Q'^{(0)}.\]

    We prove the uniqueness of decomposition \eqref{varphi} by contradiction. Let us assume that the decomposition of $\varphi\in L(t)$ is not unique:
    \[
    \exists \Psi_1,\Psi_2\in Q'^{(0)}:\quad    \Psi_1\neq \Psi_2 \text{ and }\, \Psi_1 t = \varphi,\,\Psi_2 t =\varphi.
    \]
    Subtracting the equations yields $(\Psi_1-\Psi_2)t=0$.
    It follows from Lemma \ref{Yt=0} that $\Psi_1=\Psi_2$. 
\end{proof}

The uniqueness of the decomposition \eqref{varphi} is used to prove the equivalence between the Dirac equation and the Dirac--Hestenes equation in Section~\ref{secDH}.

\subsection{The case $n=2d$}

In this subsection, we introduce generalizations of Lemma \ref{n=3} to the multidimensional case of a semi-spinor and a double spinor. 
We construct idempotents in a similar way as for the case $n=2d-1$.

\begin{lem}
\label{lemmasemi}
    Let $t$ have the form:
    \begin{equation}
    \label{oddtsemi}
    t = \frac{1}{2}(e+ e^0)\prod\limits_{\mu = 1}^{d}\frac{1}{2}(e+i e^{2\mu-1}e^{2\mu}) \in  \mathbb{C} \otimes \cl_{1,2d}.
    \end{equation}
    Then $t$ is a primitive
Hermitian idempotent and $L(t)$ is a space to which a semi-spinor belongs.
\end{lem}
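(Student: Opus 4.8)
The plan is to verify directly that the proposed $t$ is a Hermitian idempotent, and then to compute the dimension of $L(t)$ to conclude it is primitive in the sense appropriate for the $n=2d$ setting (i.e.\ it generates a minimal left ideal inside one of the two matrix summands of $\mathbb{C}\otimes\cl_{1,2d}$). First I would check Hermiticity: each factor $\frac12(e+e^0)$ and $\frac12(e+ie^{2\mu-1}e^{2\mu})$ is self-adjoint under $U^\dagger = e^0 U^* e^0$. Indeed $(e^0)^\dagger = e^0$, and since $(e^{2\mu-1}e^{2\mu})^* = e^{2\mu}e^{2\mu-1} = -e^{2\mu-1}e^{2\mu}$ one gets $(ie^{2\mu-1}e^{2\mu})^\dagger = e^0(\overline{i}(e^{2\mu-1}e^{2\mu})^*)e^0 = e^0(i e^{2\mu-1}e^{2\mu})e^0 = ie^{2\mu-1}e^{2\mu}$ because $e^0$ commutes with $e^{2\mu-1}e^{2\mu}$ for $\mu\ge 1$. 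All the factors commute with one another (disjoint index pairs, and $e^0$ commutes with each bivector $e^{2\mu-1}e^{2\mu}$), so the product is Hermitian and its square equals the product of the squares of the factors, each of which is idempotent; hence $t^2 = t$.

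Next I would identify $L(t)$ and its dimension. Writing $\pi_0 = \frac12(e+e^0)$ and $\pi_\mu = \frac12(e+ie^{2\mu-1}e^{2\mu})$ for $\mu=1,\dots,d$, the element $t = \pi_0\pi_1\cdots\pi_d$ is a product of $d+1$ commuting orthogonal idempotents. Since $\dim_{\mathbb{C}}(\mathbb{C}\otimes\cl_{1,2d}) = 2^{2d+1}$ and each commuting idempotent factor cuts the complex dimension of the right-multiplication-fixed space roughly in half, I would show $\dim_{\mathbb{C}} L(t) = 2^{2d+1}/2^{d+1} = 2^{d}$. Concretely, the algebra $Q = \cl(e^1,e^3,\dots,e^{2d-1})$ generated by the $d$ odd-index generators has $\dim_{\mathbb{C}} Q = 2^d$, its basis elements $\{c^\mu\}_{\mu=1}^{2^d}$ anticommute appropriately with the idempotent factors, and the family $\{c^\mu t\}$ spans $L(t)$ and is linearly independent (this mirrors the orthonormal-basis construction $\tau^\mu = (\sqrt2)^d c^\mu t$ used in the $n=2d-1$ case). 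Thus $\dim_{\mathbb{C}} L(t) = 2^d$, which is exactly the minimal dimension of an irreducible representation of one $\mathrm{Mat}(2^d,\mathbb{C})$ summand, so $L(t)$ is a minimal left ideal and $t$ is primitive.

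Finally I would argue that this $L(t)$ is the semi-spinor space. The extra factor $\pi_d = \frac12(e+ie^{2d-1}e^{2d})$, compared with the $n=2d-1$ idempotent \eqref{event}, together with the volume-type element, pins down one of the two central idempotents of $\mathbb{C}\otimes\cl_{1,2d}$; multiplication by $t$ therefore lands entirely in a single $\mathrm{Mat}(2^d,\mathbb{C})$ block, and within that block $t$ has rank one, so in the block-matrix picture $L(t)$ consists of matrices with one nonzero column in one block and zero in the other — precisely the description of a semi-spinor given in Section~\ref{secGAF}. The main obstacle I anticipate is the bookkeeping in the dimension count: one must be careful that the $d+1$ idempotent factors really are mutually orthogonal and that the $\{c^\mu t\}$ remain independent after the additional factor $\pi_d$ is included (so that the dimension drops to $2^d$ rather than $2^{d+1}$), and that this is consistent with $t$ living in — and projecting onto — a single matrix summand rather than straddling both. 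Checking that $ie^{2d-1}e^{2d}$ together with $e^0$ and the other bivectors generates the correct central idempotent is the step requiring the most care.
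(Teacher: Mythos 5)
Your proposal is correct and takes essentially the same route as the paper: verify that the commuting factors are each Hermitian idempotents, establish primitivity by showing $L(t)$ has the minimal possible dimension via the basis $\{c^{\mu}t\}$ with $c^{\mu}$ running over the basis of $Q=\cl(e^1,e^3,\ldots,e^{2d-1})$, and then identify $L(t)$ as the semi-spinor space directly from the definition. Your closing discussion of the central idempotent and the single matrix block plays the role of the paper's remark that confirms the same fact through the matrix representation, so it is a complement rather than a different argument.
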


\begin{proof}
    If we prove the statement that $t$ is a primitive
Hermitian idempotent, then $L(t)$ is a space to which a semi-spinor belongs according to the definition of semi-spinor. Therefore, we should prove only the first statement. 

We show that $t$ is a Hermitian idempotent. Note that all terms in product \eqref{oddtsemi} are commute. Therefore, we can consider all terms separately. We get:
$$\frac{1}{4}(e+e^0)^2 = \frac{1}{2}(e+e^0) = \frac{1}{2}(e+e^0)^{\dagger},$$
$$\frac{1}{4}(e+i e^{2\mu-1}e^{2\mu})^2 = \frac{1}{2}(e+i e^{2\mu-1}e^{2\mu}) = \frac{1}{2}(e+i e^{2\mu-1}e^{2\mu})^{\dagger}.$$
It is notable that properties \eqref{it} and \eqref{et} hold for idempotent \eqref{oddtsemi}.

If $L(t)$ has the minimum possible dimension, then the idempotent $t$ is primitive. We show the fact:
$$\operatorname{dim}_{\mathbb{C}} L(t) = 2^{d}.$$

To construct the basis $\{\tau^{k}\}_{k=1}^{2^d}$ of $L(t)$, we fix the auxiliary algebra $Q$ as for the case $n=2d-1$:
$$Q = \cl(e^1,e^3,\ldots,e^{2d-3},e^{2d-1})\subset \mathbb{C}\otimes\cl_{1,2d}, \quad \operatorname{dim}_{\mathbb{C}} Q = 2^{d}.$$
Thus, the algebra $Q$ contains the generators with odd indices. Let $c^{k}$ be a basis element of $Q$. 
We denote $f^{k}$ by \eqref{Fk} and $\tau^{k}$ by \eqref{tauF}.

Let us consider the inner product of two elements of the set $\{\tau^{k}\}$:
\begin{equation*}
    (\tau^{j},\tau^{k}) = 2^{d+1}\langle(f^{j} t)^{\dagger} f^{k} t\rangle= 2^{d+1}\langle c^{j\,\dagger} c^{k} t \rangle,\quad j=1,2,\ldots,2^d.
\end{equation*}

If parentheses in \eqref{oddtsemi} are expanded, it is noticeable that all terms contain an even generator. Therefore, $t$ does not contain terms consisting of $c^{k}$. According to the construction of $c^{k}$, we obtain:
\begin{equation*}
  2^{d+1}  \langle c^{j\,\dagger} c^{k} t \rangle = \langle c^{j\,\dagger} c^{k} \rangle = \delta^{jk}.
\end{equation*}
Hence, the set $\{\tau^{k}\}$ is the orthonormal basis of $L(t)$.
\end{proof}

\begin{rem}
We can prove Lemma \ref{lemmasemi} via the matrix representation. 
    Taking into account the recursive method of construction the matrix representation \cite{shirokov2021computing}, it becomes clear that the matrix representations of generators from $\mathbb{C} \otimes \cl_{1,2d}$ and $\mathbb{C} \otimes \cl_{1,2d+1}$ are the same, except the generator $e^{2d+1}$. Therefore, idempotent \eqref{event} belonging to $\mathbb{C} \otimes \cl_{1,2d+1}$ has the same matrix representation as idempotent~\eqref{oddtsemi} belonging to $\mathbb{C} \otimes \cl_{1,2d}$. It follows that $t$ is a primitive
Hermitian idempotent.
\end{rem}

Let us introduce Lemma \ref{decompositionsemi} on the unique decomposition of a semi-spinor.

\begin{lem}
\label{Yt=0semi}
    Let $Q'$ be $\cl(e^0,e^1,e^2,e^3,e^5,e^7,\ldots,e^{2d-1})$ and $t$ have form~\eqref{oddtsemi}. If $Y\in~Q'^{(0)}$ and $Yt=0$, then $Y=0$.
\end{lem}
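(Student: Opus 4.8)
The plan is to mimic, essentially verbatim, the proof of Lemma~\ref{Yt=0}, checking that the only place where the generator $e^{2d}$ could intervene is harmless. The key observation is that the idempotent~\eqref{oddtsemi} differs from~\eqref{event} only by the extra factor $\tfrac12(e+ie^{2d-1}e^{2d})$, and that the subalgebra $Q'=\cl(e^0,e^1,e^2,e^3,e^5,e^7,\ldots,e^{2d-1})$ is exactly the same as in Lemma~\ref{Yt=0}: its generators are $e^0$, $e^2$, and the odd-index generators $e^1,e^3,\ldots,e^{2d-1}$. In particular $e^{2d}$ does not occur among the generators of $Q'$, so an element $Y\in Q'^{(0)}$ still has a basis expansion in the $f^\mu$, $h^\mu$, $g^\mu$ of the earlier proof. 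Thus the bookkeeping that produced the expansion $Y=\sum_{\mu=1}^{2^{d-1}}(y_\mu h^\mu + y_{\mu12}h^\mu e^{12}+y_{\mu1}g^\mu e^1 + y_{\mu2}g^\mu e^2)$ carries over without any change --- here, however, $\mu$ runs up to $2^{d-1}$ because the auxiliary algebra $Q$ now has dimension $2^{d+1}$ and $\{f^\mu\}$ has $2^d$ elements, exactly as in Lemma~\ref{lemmasemi}.

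The next step is the right multiplication by $t$. Since properties~\eqref{it} and~\eqref{et} were already noted to hold for idempotent~\eqref{oddtsemi} in the proof of Lemma~\ref{lemmasemi}, the computation
\[
Yt = \sum_{\mu=1}^{2^{d-1}}\bigl((y_\mu - i y_{\mu12})h^\mu t + (y_{\mu1}+i y_{\mu2})g^\mu e^1 t\bigr)
\]
goes through identically: we use $it=It=tI$ with $I=-e^{12}$, together with $e^1e^1=-e$ and the fact that $h^\mu$, $g^\mu$ commute or anticommute with the factors of $t$ in a way that only introduces signs. Then, invoking that $\{h^\mu\}\cup\{g^\mu e^1\}$ equals $\{f^\mu\}$ up to sign and using the orthonormal basis relation $\tau^\mu=(\sqrt2)^d f^\mu t$ from Lemma~\ref{lemmasemi}, we rewrite $Yt=\sum_{\mu=1}^{2^d}(\alpha_\mu+i\beta_\mu)\tau^\mu$ with real $\alpha_\mu,\beta_\mu$ that are $\mathbb{R}$-linear combinations of the $y$'s. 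If $Yt=0$, linear independence of $\{\tau^\mu\}$ forces all $\alpha_\mu=\beta_\mu=0$, hence all $y$'s vanish and $Y=0$.

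The only genuinely new point to verify is that right multiplication of the relevant basis elements by the extra factor $\tfrac12(e+ie^{2d-1}e^{2d})$ does not collapse or mix terms in an unexpected way. Concretely, I would check that for each $f^\mu$ (a product of $e^0,e^2$ and odd-index generators $e^1,\ldots,e^{2d-1}$), the element $f^\mu e^{2d-1}e^{2d}$ is not itself proportional to some $f^\nu t$ in a way that would create a collision: because $e^{2d}$ is absent from every $f^\mu$, the product $f^\mu e^{2d-1}e^{2d}$ genuinely involves $e^{2d}$, and multiplying by $t$ then reabsorbs it using $e^{2d}t = -i e^{2d-1}t$ (which follows from $e^{2d-1}e^{2d}t = -it$, i.e. the $\mu=d$ analogue of property~\eqref{it}). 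So this is the main --- and really the only --- obstacle, and it is mild: it is precisely the statement that the orthonormality computation $2^{d+1}\langle c^{\mu\,\dagger}c^\nu t\rangle=\delta^{\mu\nu}$ from Lemma~\ref{lemmasemi} goes through, which we may assume. Once that is in hand, the argument is word-for-word the proof of Lemma~\ref{Yt=0}.
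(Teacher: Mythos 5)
Your proposal is correct and takes essentially the same route as the paper, whose proof (folded into that of Lemma \ref{decompositionsemi}) simply observes that the arguments of Lemmas \ref{Yt=0} and \ref{decomposition} carry over verbatim because the auxiliary algebra $Q$, the subalgebra $Q'$, and the orthonormal basis $\{\tau^{\mu}\}$ of $L(t)$ from Lemma \ref{lemmasemi} are unchanged --- exactly your reduction. One cosmetic slip in your aside: the correct identity is $e^{2d}t = i\,e^{2d-1}t$ (as used in the proof of Theorem \ref{OddD-Hsemi}), not $e^{2d}t=-i\,e^{2d-1}t$, but this identity is not actually needed in the argument, since linear independence of $\{\tau^{\mu}\}$ is all that is used.
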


\begin{lem}
\label{decompositionsemi}
Let $Q'$ be $\cl(e^0,e^1,e^2,e^3,e^5,e^7,\ldots,e^{2d-1})$ and $L(t)$ be the minimal left ideal generated by idempotent $t$ \eqref{oddtsemi}. Then:
\begin{equation}
\label{varphisemi}
   \forall \varphi \in L(t)\quad \exists! \Psi\in Q'^{(0)}:\quad \varphi = \Psi t.
\end{equation}
\end{lem}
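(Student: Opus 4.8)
The plan is to mimic the proof of Lemma \ref{decomposition} essentially verbatim, since the idempotent \eqref{oddtsemi} differs from \eqref{event} only by one extra factor $\frac{1}{2}(e+i e^{2d-1}e^{2d})$, and — crucially — the orthonormal basis $\{\tau^\mu\}$ of $L(t)$ in the semi-spinor case is built from the very same auxiliary algebra $Q=\cl(e^1,e^3,\ldots,e^{2d-1})$ and the same elements $f^\mu$ via \eqref{Fk}--\eqref{tauF}, as established in the proof of Lemma \ref{lemmasemi}. So the two ingredients I need are: (i) the \emph{existence} of a decomposition $\varphi=\Psi t$ with $\Psi\in Q'^{(0)}$, and (ii) its \emph{uniqueness}, which reduces to Lemma \ref{Yt=0semi}.

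For existence, I would start from the expansion of an arbitrary $\varphi\in L(t)$ in the orthonormal basis, $\varphi=\sum_{\mu=1}^{2^d}(\alpha_\mu+i\beta_\mu)\tau^\mu = (\sqrt2)^d\sum_\mu(\alpha_\mu+i\beta_\mu) f^\mu t$ with $\alpha_\mu,\beta_\mu\in\mathbb{R}$. Then I would use property \eqref{it} — which, as noted in the proof of Lemma \ref{lemmasemi}, still holds for idempotent \eqref{oddtsemi} — to push the imaginary unit through the idempotent and replace $i$ by $I=-e^{12}$, obtaining
\[
\varphi = (\sqrt2)^d\Bigl(\sum_{\mu=1}^{2^d} f^\mu(\alpha_\mu+I\beta_\mu)\Bigr) t,
\]
so that $\Psi := (\sqrt2)^d\sum_{\mu=1}^{2^d} f^\mu(\alpha_\mu+I\beta_\mu)$ does the job. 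One checks $\Psi\in Q'^{(0)}$: each $f^\mu$ lies in $Q'^{(0)}$ by construction \eqref{Fk} and $I=-e^{12}\in Q'^{(0)}$ since $e^1,e^2$ are among the generators of $Q'$, so the product is even and lies in $Q'$.

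For uniqueness, if $\Psi_1 t=\Psi_2 t=\varphi$ with $\Psi_1,\Psi_2\in Q'^{(0)}$, then $(\Psi_1-\Psi_2)t=0$, and Lemma \ref{Yt=0semi} forces $\Psi_1=\Psi_2$. Thus the whole proof is a two-line reduction once the basis description from Lemma \ref{lemmasemi} is in hand. The one genuine obstacle is really Lemma \ref{Yt=0semi} itself — i.e. proving $Y\in Q'^{(0)},\ Yt=0\Rightarrow Y=0$ for the semi-spinor idempotent — but that is a separate statement, proved (as in Lemma \ref{Yt=0}) by decomposing $Y$ along the generators $e^1,e^2$, multiplying by $t$, and recognizing the result as a combination of the $\tau^\mu$; the extra factor $\frac12(e+ie^{2d-1}e^{2d})$ in $t$ does not interfere because the splitting of $Y$ and the set $\{f^\mu\}$ are unchanged, and that factor only enters through the commuting structure already used. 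So for Lemma \ref{decompositionsemi} proper, no step is hard: the proof is complete as soon as existence and Lemma \ref{Yt=0semi} are combined.

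\begin{proof}
The argument repeats the proof of Lemma \ref{decomposition}. By the proof of Lemma \ref{lemmasemi}, the set $\{\tau^\mu=(\sqrt2)^d f^\mu t\}_{\mu=1}^{2^d}$, with $f^\mu$ given by \eqref{Fk}, is an orthonormal basis of $L(t)$. Hence any $\varphi\in L(t)$ has the form $\varphi=\sum_{\mu=1}^{2^d}(\alpha_\mu+i\beta_\mu)\tau^\mu=(\sqrt2)^d\sum_{\mu=1}^{2^d}(\alpha_\mu+i\beta_\mu) f^\mu t$ with $\alpha_\mu,\beta_\mu\in\mathbb{R}$. Since property \eqref{it} holds for idempotent \eqref{oddtsemi}, we may write $it=It$ and move $i$ through $t$:
\[
\varphi=(\sqrt2)^d\Bigl(\sum_{\mu=1}^{2^d} f^\mu(\alpha_\mu+I\beta_\mu)\Bigr)t.
\]
Put $\Psi:=(\sqrt2)^d\sum_{\mu=1}^{2^d} f^\mu(\alpha_\mu+I\beta_\mu)$. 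Each $f^\mu\in Q'^{(0)}$ by \eqref{Fk}, and $I=-e^{12}\in Q'^{(0)}$ because $e^1,e^2$ are generators of $Q'$; therefore $\Psi\in Q'^{(0)}$ and $\varphi=\Psi t$, which proves existence.

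For uniqueness, suppose $\Psi_1,\Psi_2\in Q'^{(0)}$ satisfy $\Psi_1 t=\Psi_2 t=\varphi$. Subtracting gives $(\Psi_1-\Psi_2)t=0$, and $\Psi_1-\Psi_2\in Q'^{(0)}$, so Lemma \ref{Yt=0semi} yields $\Psi_1-\Psi_2=0$, i.e.\ $\Psi_1=\Psi_2$.
\end{proof}
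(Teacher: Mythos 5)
Your proposal is correct and follows the same route as the paper, which simply observes that the auxiliary algebra $Q$, the elements $f^\mu$, and the orthonormal basis $\{\tau^\mu\}$ of $L(t)$ are unchanged for idempotent \eqref{oddtsemi}, so the proofs of Lemma \ref{Yt=0} and Lemma \ref{decomposition} carry over verbatim. You merely write out explicitly (existence via the basis expansion and property \eqref{it}, uniqueness via Lemma \ref{Yt=0semi}) what the paper states in one line.
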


\begin{proof}
   Similarly to the previous subsection, it can be proved that Lemma~\ref{Yt=0} and Lemma~\ref{decomposition} are also valid if the idempotent has form \eqref{oddtsemi}, since the auxiliary algebra $Q$ and the basis of $L(t)$ are the same. 
\end{proof}

Before introducing the generalization of Lemma \ref{n=3} to the multidimensional case of a double spinor, we present a form of the corresponding idempotent. 

\begin{lem}
\label{lemmadouble}
    Let $t$ have the form:
    \begin{equation}
    \label{oddtdouble}
    t = \frac{1}{2}(e+ e^0)\prod\limits_{\mu = 1}^{d-1}\frac{1}{2}(e+i e^{2\mu-1}e^{2\mu}) \in  \mathbb{C} \otimes \cl_{1,2d}.
    \end{equation}
    Then $t$ is a 
Hermitian idempotent and $L(t)$ is a space to which a double spinor belongs.
\end{lem}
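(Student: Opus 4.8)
\textbf{Proof proposal for Lemma \ref{lemmadouble}.}

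The plan is to mirror the proof of Lemma \ref{lemmasemi}, with the crucial difference that the idempotent \eqref{oddtdouble} omits the last factor $\frac{1}{2}(e+i e^{2d-1}e^{2d})$, so the resulting left ideal will be twice as large and $t$ will fail to be primitive. First I would verify that $t$ is a Hermitian idempotent: exactly as in Lemma \ref{lemmasemi}, the factors $\frac{1}{2}(e+e^0)$ and $\frac{1}{2}(e+i e^{2\mu-1}e^{2\mu})$ for $\mu=1,\dots,d-1$ pairwise commute (each pair of bivector blocks shares no generators, and $e^0$ squares to $+e$), each is separately an idempotent, and each is invariant under $\dagger$ because $(e^0)^\dagger = e^0$ and $(i e^{2\mu-1}e^{2\mu})^\dagger = i e^{2\mu-1}e^{2\mu}$ for the signature $(1,n)$. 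Hence the product $t$ satisfies $t^2=t$ and $t^\dagger=t$. I would also note in passing that properties \eqref{it} and \eqref{et} still hold for \eqref{oddtdouble}, since $I=-e^{12}$ and $E=e^0$ both appear among the commuting factors exactly as before.

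Next I would compute $\operatorname{dim} L(t)$ and show it equals $2^{d+1}$, which (since the minimal left ideal has dimension $2^d$ over $\mathbb{C}$ in the relevant block, i.e. $2^{d+1}$ real dimensions, and a double spinor lives in a space of twice that) identifies $L(t)$ as the space of a double spinor. For this I would take the auxiliary algebra $\widehat{Q} = \cl(e^1,e^3,\ldots,e^{2d-3},e^{2d-1},e^{2d})$, obtained from the algebra $Q$ of the previous subsections by adjoining the extra generator $e^{2d}$, so that $\operatorname{dim}\widehat{Q} = 2^{d+2}$. Writing $\{\hat c^\mu\}_{\mu=1}^{2^{d+1}}$ for its ordered-product basis, I claim $\hat\tau^\mu := (\sqrt 2)^{\,d}\,\hat c^\mu t$ forms an orthonormal basis of $L(t)$. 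The spanning part follows because every basis element of $\mathbb{C}\otimes\cl_{1,2d}$ can, using $e^\mu t = \pm t$ or $e^\mu t = \pm e^{\mu'} t$ for the generators appearing in the factors of $t$, be reduced modulo right multiplication by $t$ to a real multiple of some $\hat c^\mu t$; orthonormality follows as in Lemma \ref{lemmasemi}: expanding \eqref{oddtdouble} shows every term of $t$ except the scalar part contains a generator with even index among $e^0,e^2,\dots,e^{2d-2}$, none of which occurs in any $\hat c^\mu$, so $\langle \hat c^{\mu\,\dagger}\hat c^\nu t\rangle = 2^{-(d+1)}\langle \hat c^{\mu\,\dagger}\hat c^\nu\rangle = 2^{-(d+1)}\delta^{\mu\nu}$. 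This gives $\operatorname{dim}_{\mathbb{R}} L(t) = 2^{d+1}$.

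Finally I would observe that $t$ is \emph{not} primitive: the element $t' := \frac{1}{2}(e - i e^{2d-1}e^{2d})\,t$ (equivalently one may use the other sign) is a nonzero idempotent with $t' t = t'$, so $L(t')\subseteq L(t)$ is a proper nonzero left subideal; concretely $L(t)$ decomposes as $L(t_{+})\oplus L(t_{-})$ where $t_\pm = \frac{1}{2}(e \pm i e^{2d-1}e^{2d})\,t$ are the two primitive idempotents refining $t$, each generating a minimal left ideal of real dimension $2^d$. By the definition of a double spinor recalled in Section \ref{secGAF} — a wave function whose matrix representation consists of two blocks each lying in a minimal left ideal — this exhibits $L(t)$ as precisely the space to which a double spinor belongs, completing the proof. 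The only slightly delicate point is the bookkeeping that $\{\hat c^\mu t\}$ spans all of $L(t)$ (as opposed to something smaller), but this is the same reduction argument already used implicitly for $Q$ in the case $n=2d-1$, now with one extra generator $e^{2d}$ that commutes with every factor of $t$ except none — i.e. it simply passes through $t$ untouched — so no genuinely new obstacle arises.
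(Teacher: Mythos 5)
Your argument is sound in substance, but its second half takes a genuinely different route from the paper. The paper cites Theorem~8 of \cite{Shirokov} for the Hermitian-idempotent and orthonormal-basis facts (with the same auxiliary algebra $\widehat{Q}=\cl(e^1,e^3,\ldots,e^{2d-1},e^{2d})$ you use), and then exhibits the double-spinor structure by sorting the basis elements $c^\mu$ according to whether they contain $e^{2d}$, writing $\varphi=\varphi_1+\varphi_2 e^{2d}$ with $\varphi_1\in L(t_1)$, $\varphi_2\in L(t_2)$, where $t_{1,2}=\frac{1}{2}(e\pm e^0)\prod_{\mu=1}^{d-1}\frac{1}{2}(e+ie^{2\mu-1}e^{2\mu})$ are primitive Hermitian idempotents of the subalgebra $\mathbb{C}\otimes\cl_{1,2d-1}$ and $e^{2d}t=t_2e^{2d}$. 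You instead refine $t$ inside $\mathbb{C}\otimes\cl_{1,2d}$ itself: $t=t_++t_-$ with $t_\pm=\frac{1}{2}(e\pm ie^{2d-1}e^{2d})t$, $t_+t_-=0$, $t_+$ being exactly the semi-spinor idempotent \eqref{oddtsemi} (primitive by Lemma \ref{lemmasemi}) and $t_-$ its image under $e^{2d}\mapsto-e^{2d}$, so $L(t)=L(t_+)\oplus L(t_-)$ follows from $U=Ut_++Ut_-$. This is legitimate and arguably cleaner: the two idempotents are orthogonal summands of $t$ in the ambient algebra, so no detour through the subalgebra or the relation $e^{2d}t=t_2e^{2d}$ is needed; what the paper's route buys is direct contact with the recursive matrix construction used in its remark.

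Three things to fix or add. First, constants and dimensions: \eqref{oddtdouble} has $d$ factors, so its scalar part is $2^{-d}$, not $2^{-(d+1)}$, and with $\hat\tau^\mu=(\sqrt2)^{d}\hat c^\mu t$ you get $(\hat\tau^\mu,\hat\tau^\nu)=2^{d}\cdot2^{-d}\delta^{\mu\nu}=\delta^{\mu\nu}$; correspondingly $\dim_{\mathbb{C}}L(t)=2^{d+1}$, i.e. $\dim_{\mathbb{R}}L(t)=2^{d+2}$, and each minimal ideal $L(t_\pm)$ has real dimension $2^{d+1}$, not $2^{d}$ (your own statements contradict each other here). Second, your closing parenthetical is false: $e^{2d}$ does \emph{not} commute with every factor of $t$ --- it anticommutes with $e^0$, which is precisely why $e^{2d}t=t_2e^{2d}$ in the paper's proof; your spanning argument survives only because $e^{2d}$ is kept inside $\widehat{Q}$ and never has to be moved past $t$, so state it that way. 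Third, to match the description of a double spinor in Section \ref{secGAF} (one nonzero column in \emph{each} of the two matrix blocks, rather than two columns in one block), add the one-line check that $L(t_+)$ and $L(t_-)$ lie in different blocks: the normalized central element is proportional to $e^0e^1e^2\cdots e^{2d}=e^0(e^{1}e^{2})\cdots(e^{2d-1}e^{2d})$, and since $e^{2d-1}e^{2d}t_\pm=\mp it_\pm$ while all the other factors act on $t_+$ and $t_-$ identically, it acts with opposite signs on the two ideals. The paper handles this point through the block structure of $t_1,t_2$ in the matrix representation; with that line added, your proof is complete at the same level of rigor as the paper's.
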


\begin{proof}
Let us fix the auxiliary algebra $Q$ as follows:
$$Q = \cl(e^1,e^3,\ldots,e^{2d-3},e^{2d-1}, e^{2d})\subset \mathbb{C}\otimes\cl_{1,2d}, \quad \operatorname{dim}_{\mathbb{C}} Q = 2^{d+1}.$$
In other words, the generators of $Q$ are generators with odd indices, as in the previous cases, and the generator $e^{2d}$. 

Let $c^{k}$ be a basis element of $Q$ and $\tau^{k}$ have the form:
$$\tau^{k} = (\sqrt{2})^{d} c^{k} t,\quad k=1,2,\ldots,2^{d+1}.$$

It has been shown in Theorem 8 in  \cite{Shirokov}, that element \eqref{oddtdouble} is a Hermitian idempotent and the set $\{\tau^{k}\}$ is the orthonormal basis of $L(t)$. 

A double spinor belongs to a reducible space consisting of two irreducible spaces:
\begin{equation*}
    L(t)=L(t_1)\oplus L(t_2),
\end{equation*}
where $t_1$ and $t_2$ are primitive idempotents. We show that $L(t)$ is a space to which a double spinor belongs. We denote by $\{p^{k}\}$ a subset of $\{c^{k}\}$ such that $p^{k}$ does not contain the generator $e^{2d}$. Hence, the basis decomposition of $\varphi\in L(t)$ is:
\begin{equation*}
    \begin{aligned}
        \varphi&=  (\sqrt{2})^{d}\sum_{k=1}^{2^{d+1}}(\alpha_{k} + i \beta_{k}) c^{k} t \\
&= (\sqrt{2})^{d}\sum_{k=1}^{2^{d}}(\alpha^1_{k} + i \beta^1_{k}) p^{k} t+(\sqrt{2})^{d}\sum_{k=1}^{2^{d}}(\alpha^2_{k} + i \beta^2_{k}) p^{k} e^{2d} t,
    \end{aligned}
\end{equation*}
where the first sum belongs to the minimal left ideal $L(t_1)$, since the element $t_1$ is  primitive Hermitian idempotent \eqref{event} and $L(t_1)\subset \mathbb{C}\otimes\cl_{1,2d-1}$.

We introduce $t_2$ as follows:
$$t_2 = \frac{1}{2}(e- e^0)\prod\limits_{\mu = 1}^{d-1}\frac{1}{2}(e+i e^{2\mu-1}e^{2\mu}) \in  \mathbb{C} \otimes \cl_{1,2d-1}.$$
As in the proof of Lemma \ref{lemmasemi}, it can be shown that the element $t_2$ is a primitive Hermitian idempotent and $\{(\sqrt{2})^{d} p^{k}t_2\}_{k=1}^{2^d}$ is an orthonormal  basis of $L(t_2)\subset \mathbb{C}\otimes\cl_{1,2d-1}$. 

It is notable the following property:
$$e^{2d}t = t_2 e^{2d}.$$
Therefore, we get:
$$\varphi=\varphi_1+\varphi_2 e^{2d},\quad \varphi_1\in L(t_1),\,\varphi_2\in L(t_2).$$

It remains to show that the intersection of $L(t_1)$ and $L(t_2)$ is a zero element. If we multiply an element of $L(t_1)$ by $t_2$ on the left, we do not get the same element, except the case where the element is zero. Let us consider a basis element $(\sqrt{2})^{d} p^{k}t_1$ of $L(t_1)$:
$$(\sqrt{2})^{d} p^{k}t_1 t_2 = (\sqrt{2})^{d} p^{k}\frac{1}{4}(e+ e^0)(e- e^0)\prod\limits_{\mu = 1}^{d-1}\frac{1}{2}(e+i e^{2\mu-1}e^{2\mu})=0.$$
Similarly, If we multiply an element of $L(t_2)$ by $t_1$ on the left, the result is also zero.
\end{proof}

\begin{rem}
    Lemma \ref{lemmadouble} can also be proven via the matrix representation. The matrix representation of $\mathbb{C} \otimes \cl_{1,2d}$ is constructed using the matrix representation of $\mathbb{C} \otimes \cl_{1,2d-1}$. In the previous subsection, it has been shown that the element $t_1\in \mathbb{C} \otimes \cl_{1,2d-1}$ is the primitive Hermitian idempotent. The matrix representation of $t_2\in\mathbb{C} \otimes \cl_{1,2d-1}$ is a block matrix with only one identity element lying on the diagonal. Therefore, $t_2$ is the primitive Hermitian idempotent. Finally, the matrix representation of idempotent \eqref{oddtdouble} consists of two block matrices, the matrix representations of $t_1$ and $t_2$.
\end{rem}

Let us introduce Lemma \ref{decompositiondouble} on the unique decomposition of a double spinor.

\begin{lem}
\label{Yt=0double}
    Let $Q'$ be $\cl(e^0,e^1,e^2,e^3,e^5,e^7,\ldots,e^{2d-1},e^{2d})$ and $t$ have form~\eqref{oddtdouble}. If $Y\in~Q'^{(0)}$ and $Yt=0$, then $Y=0$.
\end{lem}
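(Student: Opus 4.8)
The plan is to mimic the proof of Lemma~\ref{Yt=0} (and its analog Lemma~\ref{Yt=0semi}), adapting the bookkeeping to the presence of the extra generator $e^{2d}$ appearing both in $Q'$ and in the construction of the ideal $L(t)$ for the double-spinor case. Recall that here the auxiliary algebra for building the basis of $L(t)$ is $Q=\cl(e^1,e^3,\ldots,e^{2d-1},e^{2d})$ of dimension $2^{d+2}$, with basis elements $\{c^{\mu}\}_{\mu=1}^{2^{d+1}}$, and $\tau^{\mu}=(\sqrt 2)^{d}c^{\mu}t$ is the orthonormal basis of $L(t)$ (with $\dim L(t)=2^{d+1}$). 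Note that $\dim Q'=2^{d+2}$ and $\dim Q'^{(0)}=2^{d+1}=\dim L(t)$, so once we show $Yt=0\Rightarrow Y=0$ the decomposition in the forthcoming Lemma~\ref{decompositiondouble} will follow exactly as in the proof of Lemma~\ref{decomposition}.

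First I would replace the odd basis elements $c^{\mu}$ of $Q$ by even elements $f^{\mu}\in Q'^{(0)}$ via the rule \eqref{Fk} (multiply an odd $c^{\mu}$ by $e^0$), using property \eqref{et} to rewrite $\tau^{\mu}=(\sqrt 2)^{d}f^{\mu}t$; this is legitimate since $e^{2d}$ commutes with the idempotent factors and $e^0$, and $e^0 t=t$. Next I would write an arbitrary $Y\in Q'^{(0)}$ in a basis adapted to the idempotent \eqref{oddtdouble}: since $t$ only constrains the pair $e^1e^2$ (through $ie^{12}t=-t$) — the generator $e^{2d}$ is genuinely free here, unlike in the semi-spinor case — I group the basis monomials of $Q'^{(0)}$ according to whether they contain $e^1$, $e^2$, both, or neither, exactly as in Lemma~\ref{Yt=0}, but now each group is further indexed by the free ``$Q$-part'' (products of $e^3,e^5,\ldots,e^{2d-1},e^{2d}$ together with $e^0$ to make things even). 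Concretely, letting $\{h^{\mu}\}$ be the even elements among $\{f^{\mu}\}$ not containing $e^1$ (hence not containing $e^2$) and $\{g^{\mu}\}$ the odd basis elements of $Q'$ not containing $e^1,e^2$, one gets
\[
Y=\sum_{\mu=1}^{2^{d}}\bigl(y_{\mu}h^{\mu}+y_{\mu12}h^{\mu}e^{12}+y_{\mu1}g^{\mu}e^{1}+y_{\mu2}g^{\mu}e^{2}\bigr),\qquad y_{\mu},y_{\mu12},y_{\mu1},y_{\mu2}\in\mathbb{R},
\]
where the index $\mu$ now ranges over $2^{d}$ values because the free part carries the extra $e^{2d}$ factor.

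Then I multiply on the right by $t$ of \eqref{oddtdouble}. Using $ie^{12}t=-t$ (property \eqref{it}), $e^1e^1=-e$, and $e^{2d}t=t_2e^{2d}$ — or more simply that $e^{2d}$ commutes with all idempotent factors and with $e^0$ — the four summands collapse pairwise: $h^{\mu}t-y_{\mu12}h^{\mu}it$ combines to $(y_{\mu}-iy_{\mu12})h^{\mu}t$, and $y_{\mu1}g^{\mu}e^1t-y_{\mu2}g^{\mu}e^1e^{12}t$ combines to $(y_{\mu1}+iy_{\mu2})g^{\mu}e^1t$. Since $\{h^{\mu}\}\cup\{g^{\mu}e^1\}$ is, up to sign, exactly the set $\{f^{\mu}\}$, we obtain $Yt=\sum_{\mu=1}^{2^{d+1}}(\alpha_{\mu}+i\beta_{\mu})\tau^{\mu}$ with real $\alpha_{\mu},\beta_{\mu}$ determined by the $y$'s via an invertible correspondence. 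Orthonormality of $\{\tau^{\mu}\}$ then forces $Yt=0\Rightarrow\alpha_{\mu}=\beta_{\mu}=0\Rightarrow Y=0$.

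The main obstacle — and the only place the argument genuinely differs from Lemma~\ref{Yt=0} — is the careful accounting of the extra generator $e^{2d}$: one must check that $e^{2d}$ never interferes with the $e^1,e^2$-bookkeeping (it commutes through everything relevant and does not occur in $t$), that the sizes match ($\dim Q'^{(0)}=\dim L(t)=2^{d+1}$, so the map $Y\mapsto Yt$ is between spaces of equal dimension and injectivity is exactly what we want), and that the passage $c^{\mu}\leftrightarrow f^{\mu}$ still produces a full set of $2^{d+1}$ elements spanning $L(t)$. Once these are in place, the computation is routine and parallels the earlier lemma verbatim; I would therefore present the proof as ``analogous to Lemma~\ref{Yt=0}, tracking the additional generator $e^{2d}$,'' filling in only the grouping of $Y$ and the collapse after multiplication by $t$.
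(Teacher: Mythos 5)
Your proof is correct and takes essentially the same route as the paper, whose entire argument is the observation that idempotent \eqref{oddtdouble} does not contain $e^{2d}$, so the bookkeeping of Lemma \ref{Yt=0} (the $e^1,e^2$ grouping, collapse via $e^{12}t=-it$, and orthonormality of $\{\tau^{\mu}\}$) carries over with the index set enlarged by the extra free generator, exactly as you spell out. The only slip is in your parenthetical dimension count: here $\dim Q'=2^{d+3}$ and $\dim Q'^{(0)}=2^{d+2}=\dim_{\mathbb{R}}L(t)$ (complex dimension $2^{d+1}$), not $2^{d+2}$ and $2^{d+1}$; this does not affect the injectivity argument, whose per-family index range $2^{d}$ is correct.
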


\begin{lem}
\label{decompositiondouble}
Let $Q'$ be $\cl(e^0,e^1,e^2,e^3,e^5,e^7,\ldots,e^{2d-1},e^{2d})$ and $L(t)$ be the minimal left ideal generated by idempotent $t$ \eqref{oddtdouble}. Then:
\begin{equation}
\label{varphidouble}
   \forall \varphi \in L(t)\quad \exists! \Psi\in Q'^{(0)}:\quad \varphi = \Psi t.
\end{equation}
\end{lem}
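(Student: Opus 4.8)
The plan is to mirror the proof of Lemma~\ref{decomposition}, splitting it into existence and uniqueness. For existence, I would start from the orthonormal basis $\{\tau^\mu = (\sqrt2)^d c^\mu t\}_{\mu=1}^{2^{d+1}}$ of $L(t)$ exhibited in the proof of Lemma~\ref{lemmadouble}, where $c^\mu$ ranges over the basis monomials of $Q = \cl(e^1,e^3,\ldots,e^{2d-1},e^{2d})$. Writing an arbitrary $\varphi\in L(t)$ as $\varphi = (\sqrt2)^d\sum_\mu(\alpha_\mu + i\beta_\mu)c^\mu t$ with real $\alpha_\mu,\beta_\mu$, I would replace each odd monomial $c^\mu$ by $f^\mu$ as in \eqref{Fk} (multiplying by $e^0$, which is legitimate since $t = e^0 t$), so that every $f^\mu$ lies in $Q'^{(0)}$; note that $Q'$ here contains $e^{2d}$ so all these monomials indeed sit in $Q'$, and the parity adjustment makes them even. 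Then using $it = It$ from \eqref{it} I would push the imaginary unit inside, landing on $\Psi = (\sqrt2)^d\sum_\mu f^\mu(\alpha_\mu + I\beta_\mu)\in Q'^{(0)}$ with $\varphi = \Psi t$. This is essentially verbatim the argument for Lemma~\ref{decomposition}, just with the larger auxiliary algebra.

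For uniqueness, the clean route is to invoke Lemma~\ref{Yt=0double}: if $\Psi_1 t = \Psi_2 t = \varphi$ with $\Psi_1,\Psi_2\in Q'^{(0)}$, then $(\Psi_1-\Psi_2)t = 0$ forces $\Psi_1 = \Psi_2$. Since Lemma~\ref{Yt=0double} is stated just above (and presumably proved there by the same grade-decomposition technique as Lemma~\ref{Yt=0}, now accounting for the extra generator $e^{2d}$), the uniqueness part of Lemma~\ref{decompositiondouble} reduces to a one-line deduction. So the cleanest write-up is: cite Lemma~\ref{lemmadouble} for the basis, run the existence computation, then cite Lemma~\ref{Yt=0double} for uniqueness — in fact one can phrase the whole thing as "identical to the proof of Lemma~\ref{decomposition}, with $Q'$ enlarged by $e^{2d}$ and Lemma~\ref{Yt=0double} in place of Lemma~\ref{Yt=0}."

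The one genuine point requiring care — and the place I'd expect the main obstacle — is the dimension bookkeeping for the double-spinor case. Unlike the spinor and semi-spinor cases where $L(t)$ is a minimal left ideal of real dimension $2^{d+1}$, here $L(t)$ is the reducible sum $L(t_1)\oplus L(t_2)$ of real dimension $2^{d+2}$, matching $\dim Q'^{(0)} = 2^{d+2}$ (since $Q'$ now has $d+3$ generators: $e^0,e^1,e^2,e^3$ and the odd ones $e^5,\ldots,e^{2d-1}$ together with $e^{2d}$). I would want to double-check that the count of monomials $f^\mu$ produced in the existence step is exactly $2^{d+1}$, that together with the factor of $2$ from the real/imaginary split this gives the full $2^{d+2}$, and hence that the surjection $\Psi\mapsto\Psi t$ from $Q'^{(0)}$ onto $L(t)$ is in fact a bijection by dimension count — which would give an alternative, slicker proof of uniqueness that sidesteps Lemma~\ref{Yt=0double} entirely. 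Whichever of the two uniqueness arguments is used, verifying that $\{f^\mu\}$ (or $\{f^\mu\}\cup\{f^\mu e^{2d}\}$, depending on how the generator $e^{2d}$ is folded into the monomial indexing) really is a basis of $Q'^{(0)}$ is the only step where an off-by-a-factor-of-two error could creep in.
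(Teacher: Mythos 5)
Your proposal is correct and takes essentially the same route as the paper: the paper's proof simply observes that, since idempotent \eqref{oddtdouble} does not involve $e^{2d}$, the arguments of Lemma \ref{Yt=0} and Lemma \ref{decomposition} carry over with the enlarged $Q$ and $Q'$ --- exactly your existence computation with $f^\mu$ and $it=It$, plus the appeal to Lemma \ref{Yt=0double} for uniqueness. Your dimension bookkeeping ($\dim_{\mathbb{R}}Q'^{(0)}=\dim_{\mathbb{R}}L(t)=2^{d+2}$, with $2^{d+1}$ monomials $f^\mu$) is also correct, so the alternative dimension-count uniqueness argument you sketch would work as well.
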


\begin{proof}
 Even though the algebra $Q'$ contains the generator $e^{2d}$, idempotent~\eqref{oddtdouble} does not contain it. Therefore, the proofs are similar to the proofs of Lemma \ref{Yt=0} and Lemma \ref{decomposition}. 
\end{proof}

\section{Multidimensional Dirac--Hestenes equation}
\label{secDH}

In the previous section, we have described a way to construct the real subalgebra $Q'^{(0)}$, which contains  solutions to the multidimensional Dirac--Hestenes equation. 

Let us present the multidimensional Dirac--Hestenes equation in the case $n=2d-1$:
\begin{equation}
    \begin{aligned}
        \label{Dir-Hest}
   & \sum_{\mu = 0,1,2,3,5,7,\ldots,2d-1}e^{\mu} (\partial_{\mu} \Psi+ \Psi a_{\mu} I) E \\
    &+\sum_{\mu = 3,5,\ldots,2d-3} (\partial_{\mu+1} \Psi + \Psi a_{\mu+1} I ) e^{\mu} E I +m\Psi I=0,
    \end{aligned}
\end{equation}
where $I=-e^{12}$ and $E=e^0$. Note that the index of the first summation is odd or equal to $0$, $2$. The sums are similar. However, the generator $e^{\mu}$ appears on the left in the first sum and on the right in the second sum. According to Lemma \ref{decomposition} and relation \eqref{varphi}, combining these sums into one is not possible as $\Psi$ may not commute with the generator $e^{\mu}$.

\begin{thm}
\label{EvenD-H}
    Let $t$ be primitive Hermitian idempotent (\ref{event}) and $Q'$ be the real algebra:
    \begin{equation}
        Q' = \cl(e^0,e^1,e^2,e^3,e^5,e^7,\ldots,e^{2d-1})\subset \cl_{1,2d-1}.\label{Q1}
    \end{equation}

    If  $\varphi(x)\in L(t)$ is a solution to multidimensional Dirac equation (\ref{Dirac}), then $\Psi(x)\in Q'^{(0)}: \varphi(x)=\Psi(x)t$ is the corresponding solution to multidimensional Dirac--Hestenes equation \eqref{Dir-Hest}.

If $\Psi(x)\in Q'^{(0)}$ is a solution to multidimensional Dirac--Hestenes equation~\eqref{Dir-Hest}, then $\varphi(x)\in L(t): \varphi(x)=\Psi(x)t$ is the corresponding solution to multidimensional Dirac equation~\eqref{Dirac}.
\end{thm}

\begin{rem}
    We omit the dependence on $x$ for all variables in the multidimensional Dirac—Hestenes equation and the multidemensional Dirac equation to make the proof more concise.
\end{rem}

\begin{proof}
Let us show that $\Psi: \varphi=\Psi t$ is a unique  solution to equation \eqref{Dir-Hest} if $\varphi$ is a solution to equation \eqref{Dirac}.
    First, the uniqueness in the statement follows from Lemma \ref{decomposition}.   
    Equation \eqref{Dirac} is multiplied by $-i$ to make its form similar to equation~\eqref{Dir-Hest}:
    \[
    \sum_{\mu=0}^{2d-1} e^{\mu}(\partial_{\mu}+i  a_{\mu})\varphi+im\varphi=0.
    \]

Using Lemma \ref{decomposition} and property \eqref{it}, we get:
    \[
    \sum_{\mu=0}^{2d-1} e^{\mu} (\partial_{\mu} \Psi t + a_{\mu}\Psi It)+m\Psi It=0.
    \]

   It is possible to factor out $t$ from the parentheses. A wave function $\Psi$ is an even element. However, the terms in the sum are not even elements since $e^{\mu}$ is the odd element. Therefore, we use property \eqref{et} to make the terms even:
    \[
    \left(\sum_{\mu=0}^{2d-1} e^{\mu} (\partial_{\mu} \Psi  + a_{\mu}\Psi I)E+m\Psi I\right)t=0.
    \]

    To apply Lemma \ref{Yt=0}, the element inside the parentheses should belong to the algebra $Q'^{(0)}$, but it belongs to $\cl^{(0)}_{1,2d-1}$. We split the sum over $\mu$ into two parts. The first sum contains only terms that belong to the algebra $Q'^{(0)}$:
    \begin{equation*}
        \begin{aligned}
            &\sum_{\mu = 0,1,2,3,5,7,\ldots,2d-1}e^{\mu} (\partial_{\mu} \Psi+ \Psi a_{\mu} I) E t \\
           & + \sum_{\mu=4,6,\ldots,2d-2}  e^{\mu} (\partial_{\mu} \Psi  + \Psi a_{\mu} I)E t +m\Psi It=0. 
        \end{aligned}
    \end{equation*}

    Let consider the second sum. We use the notation $A_{\mu} = \partial_{\mu} \Psi +  \Psi a_{\mu} I$ to make the proof more concise. Note that $A_{\mu} \in Q'^{(0)}$. 
    In the second sum, we have $e^{\mu} A_{\mu} = A_{\mu}e^{\mu}$  since $A_{\mu}$ does not contain generators with an even index and belongs to the even subalgebra.
    Taking into account anticommutation relations~\eqref{generator}, it becomes clear that $e^{\mu} E = - E e^{\mu}$. Hence, the second sum is transformed into:
    \begin{equation}
    \label{secondsum}
        \sum_{\mu = 4,6,\ldots,2d-2}\!\!\!\!\!\!\!\!\! e^{\mu} A_{\mu} E t = -\!\!\!\!\!\!\!\!\!\sum_{\mu = 4,6,\ldots,2d-2}  \!\!\!\!\!\!\!\!\!A_{\mu}E e^{\mu}t.
    \end{equation}

    The following equality can be easily proved:
    $$e^{2\mu-1}e^{2\mu} t = -i t,\quad \mu = 1,2,\ldots,d-1.$$
Multiplying the equality on the left by $e^{2\mu-1}$ and using property \eqref{it}, we get:
    \begin{equation}
    \label{e2k}
         e^{2\mu} t = e^{2\mu-1} I t.
    \end{equation}

We substitute equality \eqref{e2k} into \eqref{secondsum} and  interchange $e^{\mu-1}$ with $E$. Thus, the second sum becomes:
    $$\sum_{\mu = 4,6,\ldots,2d-2}\!\!\!\!\!\!\!\!\! e^{\mu} A_{\mu} E t = \!\!\!\!\!\!\!\!\!\sum_{\mu = 4,6,\ldots,2d-2}\!\!\!\!\!\!\!\!\!  A_{\mu} e^{\mu-1} E  It.$$

    Shifting the index $\mu$ by $1$, we obtain:
    $$ \sum_{\mu = 4,6,\ldots,2d-2}\!\!\!\!\!\!\!\!\!  A_{\mu}e^{\mu-1} E I t = \!\!\!\!\!\!\!\!\!\sum_{\mu = 3,5,\ldots,2d-3} \!\!\!\!\!\!\!\!\! A_{\mu+1}e^{\mu} E I t,$$
    where $A_{\mu+1}e^{\mu}  E I \in Q'^{(0)}$. Therefore, it is able to get multidimensional Dirac--Hestenes equation \eqref{Dir-Hest} by applying Lemma \ref{Yt=0} to the following equation:
    $$ \left(\sum_{\mu=0,1,2,3,5,7,\ldots,2d-1}\!\!\!\!\!\!\!\!\!\!\!\!\!\!\!e^{\mu} A_{\mu} E+\sum_{\mu=3,5,\ldots,2d-3}\!\!\!\!\!\!\!\!\! A_{\mu+1}e^{\mu} E I+m\Psi I\right)t=0.$$

    The second statement of Theorem \ref{EvenD-H} is able to be proven in the reverse way. We multiply equation \eqref{Dir-Hest} on the right by $t$ and combine the two real sums into the one complex sum using properties \eqref{it}, \eqref{et}, \eqref{e2k}. Then, we replace $\Psi t$ by~$\varphi$.  
\end{proof}

For the case $n=2d$, we should consider one additional term that contains the generator $e^{2d}$ in the sum in multidimensional Dirac equation \eqref{Dirac} compared to the case $n=2d-1$. Since a solution to equation (\ref{Dirac}) can be a semi-spinor or a double spinor in the even case, a form of the multidimensional Dirac--Hestenes equation should differ. 

Let us present the multidimensional Dirac--Hestenes equation in the case of a semi-spinor:
\begin{equation}
\label{SemispinorDir-Hest}
\begin{aligned}
      &\sum_{\mu = 0,1,2,3,5,7,\ldots,2d-1}e^{\mu} (\partial_{\mu} \Psi+ \Psi a_{\mu} I) E\\
      &+\sum_{\mu = 3,5,\ldots,2d-3,2d-1}  (\partial_{\mu+1} \Psi + \Psi a_{\mu+1} I ) e^{\mu} E I +m\Psi I=0,
\end{aligned}
\end{equation}
where $I=-e^{12}$ and $E=e^0$. According to Lemma \ref{decompositionsemi} and relation \eqref{varphisemi},
$\Psi$ commutes with $e^{2d}$. Therefore, the term in (\ref{Dirac}) that contains this generator, has been added to the second sum.

\begin{thm}
\label{OddD-Hsemi}
    Let $t$ be primitive Hermitian idempotent \eqref{oddtsemi} and $Q'$ be the real algebra:
    \begin{equation}
        Q' = \cl(e^0,e^1,e^2,e^3,e^5,e^7,\ldots,e^{2d-1})\subset \cl_{1,2d}.\label{Q2}
    \end{equation}

    If  $\varphi(x)\in L(t)$ is a solution (semi-spinor) to multidimensional Dirac equation (\ref{Dirac}), then $\Psi(x)\in Q'^{(0)}: \varphi(x)=\Psi(x)t$ is the corresponding solution to multidimensional Dirac--Hestenes equation \eqref{SemispinorDir-Hest}.

If $\Psi(x)\in Q'^{(0)}$ is a solution to multidimensional Dirac--Hestenes equation~\eqref{SemispinorDir-Hest}, then $\varphi(x)\in L(t): \varphi(x)=\Psi(x)t$ is the corresponding solution to multidimensional Dirac equation~\eqref{Dirac}.
\end{thm}

\begin{proof}
We can make the same steps as in the proof of Theorem \ref{EvenD-H} to get:
  \begin{equation*}
  \begin{aligned}
        &\sum_{\mu = 0,1,2,3,5,7,\ldots,2d-1}e^{\mu} (\partial_{\mu} \Psi+ \Psi a_{\mu} I) E t\\ &+\sum_{\mu=4,6,\ldots,2d}  e^{\mu} (\partial_{\mu} \Psi  + \Psi a_{\mu} I)E t +m\Psi It=0.
  \end{aligned}
\end{equation*}
The second sum has been obtained to consider terms that contain the generators with even indices, in equation \eqref{Dirac}. It is notable that these generators commute with $\Psi$.

Using the following equality in the second sum:
    \[
    e^{2\mu} t = i e^{2\mu-1} t,\quad \mu = 1,2,\ldots,d
    \]
and Lemma \ref{Yt=0semi}, we get multidimensional Dirac--Hestenes equation \eqref{SemispinorDir-Hest}. It follows from Lemma \ref{decompositionsemi} that the second statement of Theorem \ref{OddD-Hsemi} holds.
\end{proof}

Let us present the multidimensional Dirac--Hestenes equation in the case of a double spinor:
\begin{equation}
\label{DoublespinorDir-Hest}
\begin{aligned}
     &\sum_{\mu = 0,1,2,3,5,7,\ldots,2d-3,2d-1,2d}e^{\mu} (\partial_{\mu} \Psi+ \Psi a_{\mu} I) E\\ &+ \sum_{\mu = 3,5,\ldots,2d-3}  (\partial_{\mu+1} \Psi + \Psi a_{\mu+1} I ) e^{\mu} E I +m\Psi I=0,
\end{aligned}
\end{equation}
where $I=-e^{12}$ and $E=e^0$. According to Lemma \ref{decompositiondouble} and relation \eqref{varphidouble}, $\Psi$ does not commute with the generator $e^{2d}$. Hence, the term in (\ref{Dirac}) that contains this generator, has been added to the first sum.

\begin{thm}
\label{OddD-Hdouble}
    Let $t$ be primitive Hermitian idempotent \eqref{oddtdouble} and $Q'$ be the real algebra:
    \begin{equation}
        Q' = \cl(e^0,e^1,e^2,e^3,e^5,e^7,\ldots,e^{2d-1},e^{2d})\subset \cl_{1,2d}.\label{Q3}
    \end{equation}

    If  $\varphi(x)\in L(t)$ is a solution (double spinor) to multidimensional Dirac equation (\ref{Dirac}), then $\Psi(x)\in Q'^{(0)}: \varphi(x)=\Psi(x)t$ is the corresponding solution to multidimensional Dirac--Hestenes equation \eqref{DoublespinorDir-Hest}.

If $\Psi(x)\in Q'^{(0)}$ is a solution to multidimensional Dirac--Hestenes equation~\eqref{DoublespinorDir-Hest}, then $\varphi(x)\in L(t): \varphi(x)=\Psi(x)t$ is the corresponding solution to multidimensional Dirac equation~\eqref{Dirac}.
\end{thm}

\begin{proof}
The difference between the proofs of Theorem \ref{OddD-Hsemi} and Theorem \ref{OddD-Hdouble} is to consider the term that contains the generator $e^{2d}$. In the case of a double spinor, we include it in the first sum:
 \begin{equation*}
 \begin{aligned}
    &\sum_{\mu = 0,1,2,3,5,7,\ldots,2d-3,2d-1,2d}e^{\mu} (\partial_{\mu} \Psi+ \Psi a_{\mu} I) E t\\ 
    &+\sum_{\mu=4,6,\ldots,2d-2}  e^{\mu} (\partial_{\mu} \Psi  + \Psi a_{\mu} I)E t +m\Psi It=0,
 \end{aligned}
\end{equation*}
since the element $\Psi\in Q'^{(0)}$ does not commute with the generator $e^{2d}$.

Using the following equality in the second sum:
    \[
    e^{2\mu} t = i e^{2\mu-1} t,\quad \mu = 1,2,\ldots,d-1,
    \]
and Lemma \ref{Yt=0double}, we get multidimensional Dirac--Hestenes equation \eqref{DoublespinorDir-Hest}. It follows from Lemma \ref{decompositiondouble} that the second statement of Theorem \ref{OddD-Hdouble} holds.
\end{proof}

\begin{ex}[Dirac--Hestenes equation for graphene]

    Dirac equation \eqref{Dirac} which is used to describe electrons in graphene, is considered in $\mathbb{C}\otimes\cl_{1,2}$. 
    The bilayer graphene consists of two folded monolayers of carbon. We denote them as $1$ and $2$ monolayers. The elementary cell of each monolayer has two inequivalent regions $A$ and $B$. Therefore, the wave function has four real components, which can be arranged in the following order $\{\varphi_{A,1},\varphi_{A,2},\varphi_{B,1},\varphi_{B,2}\}$ \cite{castro2009electronic,dargys2014pseudospin}. In other words, the wave function can be represented as a two-component complex-valued vector. 
    Thus, a wave function is a semi-spinor.

To show the basis decomposition of a wave function $\varphi\in L(t)$, let us fix the idempotent $t$ as \eqref{oddtsemi} in the case $d=1$. 
In accordance with the described method, the algebra $Q$ has only one generator $e^1$, in other words, $Q = \cl(e_1)\subset\mathbb{C}\otimes\cl_{1,2}$. Therefore, the orthonormal basis $\{\tau^{k}\}_{k=1}^2$ of $L(t)$ has form~\eqref{tauF}:
\begin{eqnarray}
    \tau^1 = 2 e t,\quad \tau^2 = 2 e^{01} t
\end{eqnarray}
and the basis decomposition of $\varphi$ is:
\begin{eqnarray}
    \varphi = (\varphi_{A,1}+i\varphi_{A,2})\tau^1 + (\varphi_{B,1}+i\varphi_{B,2})\tau^2.
\end{eqnarray}

Let us introduce the Dirac--Hestenes equation for graphene: 
\begin{equation}
    \sum_{\mu = 0,1,2,3}e^{\mu} (\partial_{\mu} \Psi+ \Psi a_{\mu} I) E + m\Psi I=0,
\end{equation}
where $I=-e^{12}$ and $E=e^0$.

A solution $\Psi$ to the Dirac--Hestenes equation belongs to the even subalgebra of the auxiliary algebra $Q'$. The algebra $Q'$ has been constructed by adding the generators $e^0$ and $e^2$ to the algebra $Q$. Therefore, we get that $Q' = \cl(e^0,e^1,e^2)=\cl_{1,2}$ and $\Psi\in\cl_{1,2}^{(0)}$. Due to Lemma \ref{decompositionsemi}, the explicit form $\Psi$ is:
\begin{eqnarray}
    \Psi = 2\left(\varphi_{A,1}e+\varphi_{A,2}I + \varphi_{B,1}e^{01}+\varphi_{B,2} e^{01}I\right)\in\cl^{(0)}_{1,2}.
\end{eqnarray}

\end{ex}

\section{Gauge Invariance}
\label{secInv}

It is known that the four-dimensional Dirac--Hestenes equation has gauge invariance \cite{lasenby1998gravity,Hestenes}. It means that one can shift an electromagnetic potential without physical consequences.
The multidimensional Dirac--Hestenes equation also has gauge invariance. To prove this fact, we use the following lemma.

\begin{lem}
\label{e012}
Let $I=-e^{12}$ and $E=e^0$ be elements of the geometric algebra $\cl_{1,n}$. Then:
\begin{equation}
\label{exp}
    \exp({I\lambda}) E = E \exp({I\lambda}),\quad \forall\lambda\in\mathbb{R}.
\end{equation}
\end{lem}

\begin{proof}
    We decompose the exponent into the series:
    \[
    \exp(I\lambda) = \sum_{k=0}^{\infty} \frac{(I\lambda)^k}{k!}=e\sum_{k=0}^{\infty} \frac{(-1)^k\lambda^{2k}}{(2k)!}  +I\sum_{k=0}^{\infty} \frac{(-1)^{k}\lambda^{2k+1}}{(2k+1)!}=e\cos \lambda +I\sin\lambda.
    \]

    Both terms commute with $E$.
\end{proof}

\begin{rem}
    Equation \eqref{exp} could have been derived using the isomorphism between matrix algebra and geometric algebra. Since $I^2=-e$ and Euler's identity holds for matrices, equation \eqref{exp} follows directly from these properties.
\end{rem}

\begin{thm}
\label{gaugetheor}
        Let $Q'$ be the real algebra \eqref{Q1} (or \eqref{Q2} and  \eqref{Q3} respectively). If $\Psi(x)\in Q'^{(0)}$ is a solution to multidimensional Dirac--Hestenes equation~\eqref{Dir-Hest} (or \eqref{SemispinorDir-Hest} and \eqref{DoublespinorDir-Hest} respectively) with an electromagnetic vector-potential $\textbf{a}(x)$, then $\Tilde{\Psi}(x)$ is also a solution to the same multidimensional Dirac--Hestenes equation with an electromagnetic vector-potential $\Tilde{\textbf{a}}(x)$:
        \begin{equation}
            \Tilde{\Psi}(x) = \Psi(x) \exp({I\lambda(x)}), \quad\quad \Tilde{a}_{\mu}(x)= a_{\mu}(x) -\partial_{\mu}\lambda(x),
        \end{equation}
    where $\lambda(x): \mathbb{R}^{1,n}\to \mathbb{R}$.
    \end{thm}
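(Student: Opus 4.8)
The plan is to substitute the gauge-transformed pair $(\tilde\Psi,\tilde a_\mu)$ into the relevant Dirac--Hestenes equation and check that the equation is satisfied, using that $\Psi$ already solves it with potential $a_\mu$. Since the three equations \eqref{Dir-Hest}, \eqref{SemispinorDir-Hest}, \eqref{DoublespinorDir-Hest} have exactly the same shape --- two sums of terms of the form $e^\mu(\partial_\mu\Psi+\Psi a_\mu I)E$ or $(\partial_{\mu+1}\Psi+\Psi a_{\mu+1}I)e^\mu E I$ plus $m\Psi I$, differing only in which indices appear in which sum --- it suffices to treat a single generic term $e^\mu(\partial_\mu\Psi+\Psi a_\mu I)E$ (the argument for the $e^\mu E I$-type terms and for the mass term is identical, since $e^\mu E$ or $e^\mu E I$ just sits to the left of an expression built from $\partial_\mu\Psi$, $\Psi$, $a_\mu$, $I$).

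First I would compute $\partial_\mu\tilde\Psi$. By the product rule and the fact that $\partial_\mu\exp(I\lambda)=(\partial_\mu\lambda)I\exp(I\lambda)=I(\partial_\mu\lambda)\exp(I\lambda)$ (here $I$ is a constant multivector commuting with the scalar $\partial_\mu\lambda$),
\begin{equation*}
\partial_\mu\tilde\Psi=(\partial_\mu\Psi)\exp(I\lambda)+\Psi I(\partial_\mu\lambda)\exp(I\lambda).
\end{equation*}
Next I form the generic combination appearing inside a term of the equation:
\begin{equation*}
\partial_\mu\tilde\Psi+\tilde\Psi\tilde a_\mu I=(\partial_\mu\Psi)\exp(I\lambda)+\Psi(\partial_\mu\lambda)I\exp(I\lambda)+\Psi\exp(I\lambda)(a_\mu-\partial_\mu\lambda)I.
\end{equation*}
The crucial point is that $\exp(I\lambda)$ commutes with $I$ (both are built from powers of $I$), so $\Psi\exp(I\lambda)(a_\mu-\partial_\mu\lambda)I=\Psi(a_\mu-\partial_\mu\lambda)I\exp(I\lambda)$, and the two $\partial_\mu\lambda$ contributions cancel, leaving
\begin{equation*}
\partial_\mu\tilde\Psi+\tilde\Psi\tilde a_\mu I=(\partial_\mu\Psi+\Psi a_\mu I)\exp(I\lambda).
\end{equation*}
Hence every term $e^\mu(\partial_\mu\tilde\Psi+\tilde\Psi\tilde a_\mu I)E$ equals $e^\mu(\partial_\mu\Psi+\Psi a_\mu I)E\exp(I\lambda)$ once we move $\exp(I\lambda)$ past $E$ using Lemma \ref{e012}; similarly the $e^\mu E I$-type terms and the mass term $m\tilde\Psi I=m\Psi I\exp(I\lambda)$ acquire the same right factor $\exp(I\lambda)$. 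Therefore the whole left-hand side of the Dirac--Hestenes equation evaluated at $(\tilde\Psi,\tilde a_\mu)$ equals the left-hand side evaluated at $(\Psi,a_\mu)$ multiplied on the right by $\exp(I\lambda)$, which is $0$ since $\Psi$ is a solution.

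Two small checks remain. One must confirm $\tilde\Psi\in Q'^{(0)}$: since $I=-e^{12}\in Q'^{(0)}$ and $Q'^{(0)}$ is a subalgebra, $\exp(I\lambda)\in Q'^{(0)}$, so $\tilde\Psi=\Psi\exp(I\lambda)\in Q'^{(0)}$. One should also note that $\exp(I\lambda)$ commutes past $e^\mu E$ or $e^\mu E I$ only because we applied Lemma \ref{e012} to move it past $E=e^0$ and because $\exp(I\lambda)$ commutes with $I$; it is \emph{not} claimed to commute with the individual $e^\mu$. I do not expect any genuine obstacle here: the only thing to be careful about is the order of factors (the exponential must be kept on the right throughout and moved past $E$ via Lemma \ref{e012}), and that all three equation forms are handled uniformly because the gauge factor attaches on the right to every summand identically.
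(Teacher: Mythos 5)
Your proposal is correct and follows essentially the same route as the paper: substitute $(\Tilde{\Psi},\Tilde{a}_{\mu})$, apply the product rule so the $\partial_{\mu}\lambda$ contributions cancel, and pull the factor $\exp(I\lambda)$ to the right of every summand using Lemma \ref{e012}, leaving the original left-hand side times $\exp(I\lambda)$. The only slight inaccuracy is your closing remark that commutation with the individual $e^{\mu}$ is never invoked: for the second-sum terms $(\partial_{\mu+1}\Psi+\Psi a_{\mu+1}I)e^{\mu}EI$ the exponential must also pass $e^{\mu}$ with $\mu\ge 3$, which does hold because such $e^{\mu}$ commute with $e^{12}$ (the paper is equally implicit on this point).
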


\begin{proof} Let us prove the statement for the case $n=2d-1$ with \eqref{Dir-Hest} 
and \eqref{Q1}. In the case $n=2d$, the statements for semi-spinors with \eqref{SemispinorDir-Hest} and \eqref{Q2}  and for double spinors with \eqref{DoublespinorDir-Hest} and \eqref{Q3} are proved similarly.

We substitute the functions $\Tilde{\Psi}$ and $\Tilde{a}_{\mu}$ into the left side of Dirac--Hestenes equation (\ref{Dir-Hest}):
\begin{eqnarray*}
& &\sum_{\mu = 0,1,2,3,5,7,\ldots,2d-1} e^{\mu} (\partial_{\mu} \left(\Psi \exp({I\lambda})\right)+ \Psi\exp({I\lambda}) (a_{\mu} -\partial_{\mu}\lambda) I) E  \nonumber \\
 & &+\sum_{\mu = 3,5,\ldots,2d-3}  (\partial_{\mu+1} \left(\Psi \exp({I\lambda})\right) + \Psi\exp({I\lambda}) (a_{\mu+1}-\partial_{\mu+1}\lambda) I ) e^{\mu} E I\\
 &&+m\Psi\exp({I\lambda}) I.
\end{eqnarray*}

There is a fact that:
$$\partial_{\mu} (\Psi\exp({I\lambda}))= \partial_{\mu} (\Psi)\exp({I\lambda})+\Psi\partial_{\mu}(\lambda)I\exp({I\lambda}),\; \mu = 0,1,\ldots,2d-1.$$

Simplifying the terms and taking into account Lemma \ref{e012}, we obtain:
\begin{eqnarray*}
& &\left(\sum_{\mu = 0,1,2,3,5,7,\ldots,2d-1} e^{\mu} (\partial_{\mu} \Psi+ \Psi a_{\mu} I) E\right) \exp({I\lambda})   \\
 & &+\left(\sum_{\mu = 3,5,\ldots,2d-3}  (\partial_{\mu+1} \Psi + \Psi a_{\mu+1} I ) e^{\mu} E I  +m\Psi I\right)\exp({I\lambda})=0.
\end{eqnarray*}

Therefore, the wave function $\Tilde{\Psi}$ is also a solution to equation (\ref{Dir-Hest}) with the shifted electromagnetic vector-potential $\Tilde{\textbf{a}}$. 
\end{proof}

\section{Conclusion}
\label{secCon}
\vspace{-0.5em}
Multidimensional Dirac spinors are used in supersymmetry theory, for instance, in supersymmetric Yang–Mills theory or M-theory. Actually, it is convenient to consider the spinors in a real geometric algebra. However, taking into account the matrix representation of the complexified geometric algebra $\mathbb{C}\otimes\cl_{1,n}$, it becomes clear that the cases $n=2d-1$ and $n=2d$ differ. Also, in the even case, a Dirac spinor can be represented as a semi-spinor or a double spinor. In this paper, we have considered all cases. 

One of the key advantages of considering multidimensional Dirac--Hestenes is that its solutions belong to a real subalgebra of $\cl_{1,n}$. We have constructed the auxiliary algebra $Q'$ using $e^0$, $e^2$, and the generators with odd indices. However, the algebra $Q'$ also contains the generator $e^{n}$ in the case of a double spinor. 

For all cases, we have proved lemmas on the unique decomposition of an element of the left ideal $L(t)$ into the product of an element of the even real subalgebra $Q'^{(0)}$ and the corresponding idempotent. Using these lemmas, the equivalence between the multidimensional Dirac--Hestenes equation and the multidimensional Dirac equation has been shown. As an application, we have introduced the explicit form of a solution to three-dimensional Dirac--Hestenes equation for graphene. Also, we have shown that the multidimensional Dirac--Hestenes equation has gauge invariance, similar to the four-dimensional case.


\subsection*{Acknowledgment}

The results of this paper were reported at the ENGAGE Workshop within the International Conference Computer Graphics International 2024 (Geneva, Switzerland, CGI 2024). The authors are grateful to the organizers and the participants of this conference for fruitful discussions.

The authors are grateful to the anonymous reviewers for their careful reading of the paper and helpful comments on how to improve the presentation.

The publication was prepared within the framework of the Academic Fund Program at HSE University (grant 24-00-001 Clifford algebras and applications).

\medskip

\noindent{\bf Data availability} Data sharing not applicable to this article as no datasets were generated or analyzed during the current study.

\medskip

\noindent{\bf Declarations}\\
\noindent{\bf Conflict of interest} The authors declare that they have no conflict of interest.


\bibliographystyle{spmpsci}
\bibliography{myBibLib} 

\end{document}